\begin{document}
\title{Signed Node Relevance Measurements}

\author{Tyler Derr}
\affiliation{%
  \institution{Data Science and Engineering Lab\\Michigan State University}
}
\email{derrtyle@msu.edu}

\author{Chenxing Wang}
\affiliation{%
  \institution{Data Science and Engineering Lab\\Michigan State University}
}
\email{wangch88@msu.edu}

\author{Suhang Wang}
\affiliation{%
  \institution{Computer Science and Engineering\\Arizona State University}
}
\email{suhang.wang@asu.edu}

\author{Jiliang Tang}
\affiliation{%
  \institution{Data Science and Engineering Lab\\Michigan State University}
}
\email{tangjili@msu.edu}

\begin{abstract}
In this paper, we perform the initial and comprehensive study on the problem of measuring node relevance on signed social networks.  We design numerous relevance measurements for signed social networks from both local and global perspectives and investigate the connection between signed relevance measurements, balance theory and signed network properties. Experimental results are conducted to study the effects of signed relevance measurements with four real-world datasets on signed network analysis tasks. 
\end{abstract}

\maketitle

\section{Introduction}

Traditionally network analysis has focused on unsigned networks. However, many online social networking services provide mechanisms that allow users to create not only positive links, but also negative relations. These social networks with both positive and negative links are known as signed social networks, where the negative links users give can denote their foes (e.g., Slashdot), those they distrust (e.g., Epinions), or ``unfriended'' friends and blocked users (e.g., Facebook and Twitter). It is due to this diverse set of signed networks appearing in today's social media that has lead to their increased attention in the recent years; as well as the increased availability due to the more and more popularity of online social media~\cite{yang2007community,leskovec2010signed,hsieh2012low,anchuri2012communities}.
 
Node relevance, which measures how relevant two nodes are in a social network, is one of the keystones of social network analysis. This has been shown by their usage in diverse social network analysis tasks and applications such as link prediction~\cite{backstrom2011supervised,yin2010unified}, node classification~\cite{bhagat2011node}, community detection~\cite{tang2010community}, search and recommendations~\cite{yin2012longtail}. The vast majority of existing node relevance measurements have been designed for unsigned networks ( or social networks with only positive links)~\cite{barabasi1999emergence,adamic2003friends}. However, the availability of negative links in signed networks poses tremendous challenges to unsigned relevance measurements. For instance, most unsigned relevance measurements require all links positive~\cite{scott2012social}. Meanwhile, the fundamental principles and theories of signed networks are substantially different from those of unsigned networks. For example, some social theories such as balance theory~\cite{heider1946attitudes} are only applicable to signed networks, while social theories for unsigned networks such as homophily may not be applicable to signed networks~\cite{tang2014distrust}.  Therefore, relevance measurements for signed networks need dedicated efforts since it cannot be executed by simply applying those for unsigned networks. 

On the other hand, the existence of negative links also brings about unprecedented opportunities in signed relevance measurements. It is evident from recent research that negative links have significant added value over positive links in various analytical tasks. For example, a small number of negative links can significantly improve positive link prediction~\cite{guha2004propagation,leskovec2010predicting}, and they can also boost the performance of recommender systems~\cite{victor2009trust,ma2009learning}. Thereby, negative links could offer the potential to help us develop novel relevance measurements for signed networks. There are a few very recent works in designing node similarities for link prediction~\cite{symeonidis2014transitive,Jung2016srwr}. However, a general and systematic investigation on signed relevance measurements and their effects on signed network analysis are still desired since it can greatly advance our understandings about signed social networks.  

In this paper, we perform the initial and comprehensive study on the problem of measuring node relevance on signed social networks.  Analogous to node relevance research in unsigned networks, we aim to investigate the following: (a) how to make use of both positive and negative links in signed relevance measurements; and (b) what are the effects of these measurements on signed network analysis. The main contributions of the paper are summarized as follows:

\begin{itemize}
	\item Design numerous relevance measurements for signed social networks from both local and global perspectives.;
	\item Investigate the connection between signed relevance measurements, and balance theory and signed network properties; and 
	\item Study the effects of signed relevance measurements with four real-world datasets on two signed network analysis tasks - link prediction and tie strength prediction.
\end{itemize}

The rest of this paper is organized as follows. In Section 2, we review related work in node relevance measurements and signed networks. We describe the four signed network datasets used in this paper, a preliminary analysis of the data, along with some validation for balance theory in Section 3. Then, in Section 4, we present numerous node relevance measurements specific to signed networks. In Section 5 we perform experiments for predicting links and tie strength predictions when using the node relevance algorithms previously discussed in Section 4. Finally, conclusions are given along with our future work in Section 6.

\section{Related Work}

Our work is related to node relevance measurements and signed network analysis. In the following subsections, we will briefly overview them. 

\subsection{Node Relevance Measurements}

Measuring node relevance is fundamental to social network analysis. Most of existing node relevance measurements have been developed for unsigned social networks. According to the used information, we can roughly categorize them into local and global methods.  Local methods, commonly referred as structural equivalence~\cite{lorrain1971structural}, use local node neighborhood information. Representative local measurements include common neighbors and its variants, Jaccard Index and its variants such as Sorensen Index, Adamic-Adar Index~\cite{adamic2003friends}, and Preferential Attachment Index~\cite{barabasi1999emergence}. Global methods not only utilize the local neighborhoods but also propagate the relevance information through the whole network. Representative global measurements include Katz~\cite{katz1953sim}, SimRank~\cite{jeh2002simrank}, ASCOS and ASCOS++~\cite{chen2013ascos,chen2015ascos}, and random walk with restart (RWR) and its variants~\cite{tong2006fast}. One recent work extends RWR for personalized ranking in signed social networks~\cite{Jung2016srwr} and a few recent works studied node similarities for link prediction~\cite{symeonidis2014transitive}. However, to the best of our knowledge, this work is the initial and comprehensive study about node relevance measurements in signed social networks. 

\subsection{Signed Network Analysis}

 With roots in social psychology~\cite{heider1946attitudes,cartwright1956structural}, signed network analysis has attracted increasing attention in recent years. However, the development of tasks of signed social network analysis is highly imbalanced~\cite{tang2016survey}. Some tasks have been extensively studied such as social balance in signed networks~\cite{facchetti2011computing, zheng2015social},  link prediction~\cite{leskovec2010predicting,chiang2011exploiting}, and community detection~\cite{chiang2012scalable,kunegis2010spectral}; some tasks are still in the very early stages of development such as signed network embedding~\cite{wang2017signed} and negative link prediction~\cite{tang2015negative}; while others have not been comprehensively investigated such as node relevance measurements and signed network modeling. A comprehensive overview about signed network analysis can be found in~\cite{tang2016survey}.

\section{ Data Analysis} \label{data_analysis}

In this section, we will first introduce the datasets we will use for this study and then perform preliminary analysis with them.

\subsection{Datasets}

In this work, we collect four signed network datasets to study signed relevance measurements, i.e., Bitcoin-Alpha\footnote{http://www.btcalpha.com}, Bitcoin-OTC\footnote{https://www.bitcoin-otc.com}, Slashdot\footnote{http://www.slashdot.org} and Epinions\footnote{http://www.epinions.com}. Below we describe more details about these datasets.

The Alpha network is a signed network we collected from Bitcoin Alpha. Similarly we collected Bitcoin-OTC from Bitcoin OTC. Both of these datasets were collected from publicly available data from their respective websites. The two Bitcoin sites are open market websites that allow users to buy and sell things. Due to the anonymity behind users' Bitcoin account, users of these websites form trust networks to prevent against scammers (e.g., fake users who are just attempting to have another user send them bitcoins, but never deliver their end of the deal, which is usually the delivery of some other monetary good). In addition to the signed networks,  users in both websites can specify scores in [1,10] (or [-10,-1]) to indicate the positive (or negative) tie strength. All the data from these websites was exhaustively crawled on December 18th of 2016. Note that negative links in both websites are visible to the public. 

The Slashdot dataset was obtained from~\cite{kunegis2009slashdot}. Slashdot focuses on providing technology news since 1997. One of the unique features is that since 2002 the website has allowed users to explicitly mark other users as their friends (positive links) or foes (negative links).  Note that negative links in Slashdot are only visible to users who login the system. 

We have collected a dataset from the product review site Epinions where users can establish trust (positive) and distrust (negative) links. In addition, users can write reviews for items from certain pre-defined categories. We also collect category information for each user. Such information will serve as the ground-truth for the task of node classification. More details will be discussed in the following sections. Note that negative links in Epinions are totally invisible to the public but in the dataset, negative links were given by Epinions staff for the research purpose. 

\begin{table}
\begin{center}
\caption{Statistics of four signed social networks.}
\label{tab:datasets}
\begin{tabular}{c|c|c|c}
	\hline
	Network & \# Users & \# Positive &\# Negative  \\\hline
	Bitcoin-Alpha  & 3,784 & 22,651  & 1,556     \\	
	Bitcoin-OTC & 5,901 & 32,448  & 3,526     \\	
	Slashdot  & 79,116 & 392,179  & 123,218     \\	
	Epinions  & 131,828 & 717,667  & 123,705 \\ \hline
\end{tabular}
\end{center}
\end{table}

Some statistics are demonstrated in Table~\ref{tab:datasets}. We note from the table that in all datasets, negative links are sparser than positive links, thus negative links could have different properties from positive links. Meanwhile, previous studies suggested that balance theory is helpful to explain social phenomena in signed networks~\cite{leskovec2010predicting}. Thus, in the following subsections, we will study properties of negative links analogous to positive links and validate balance theory in four real-world signed networks.

\subsection{Degree Distributions}

As we know, the distributions of in- or out-degrees of positive links in unsigned networks follow power-law distributions -- most nodes with small degrees while a few nodes with large degrees~\cite{barabasi1999emergence}.  In this subsection, we examine whether similar distributions can be observed for positive and negative links in signed social networks. 

For each user, we calculate the numbers of in- and out-degrees for positive and negative links, separately. The distributions of in- and out-degrees of positive and negative links in four signed networks are demonstrated in Figure~\ref{fig:degree-distributions}. From the figure, it is clearly observed that the degree distributions of positive and negative links in all four signed networks also follow power-law distributions. For instance, a few nodes give a large number of negative links; while many nodes only give few negative links. 

\begin{figure}
	\begin{center}
	  \subfigure[Bitcoin-Alpha] {\includegraphics[scale=0.15]{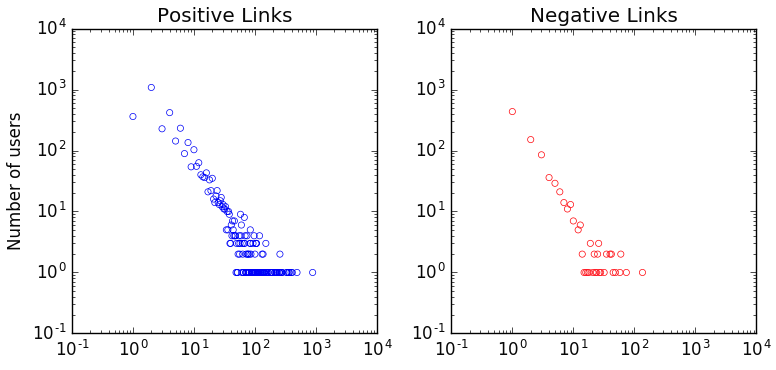}}  
	  \subfigure[Bitcoin-BTC]{\includegraphics[scale=0.15]{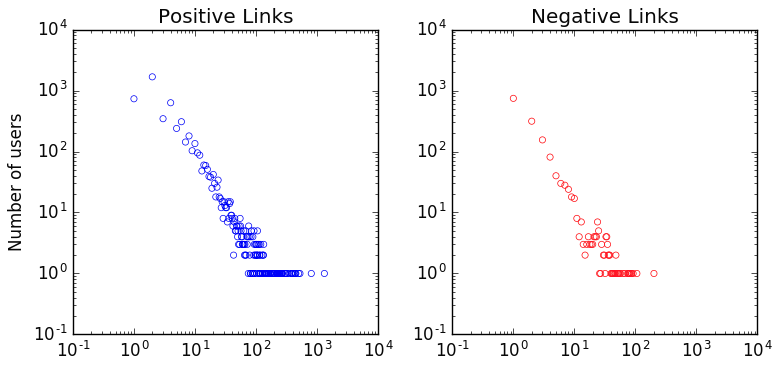}}	
  	  \subfigure[Epinions]{\includegraphics[scale=0.15]{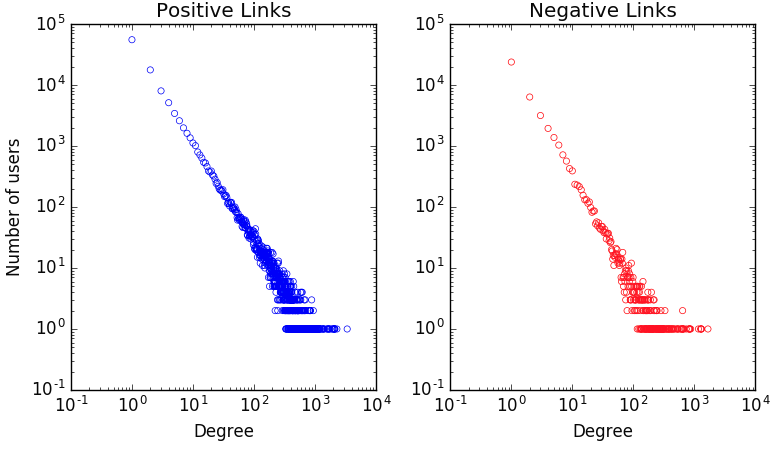}}
	  \subfigure[Slashdot]{\includegraphics[scale=0.15]{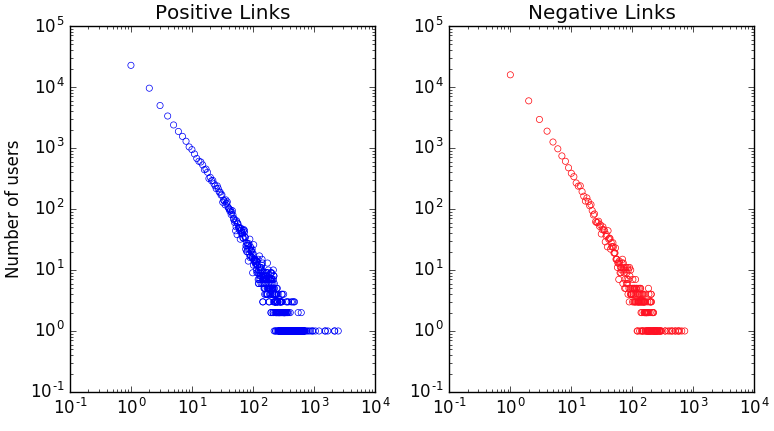}}			
	\end{center}
\vspace{-0.1in}
\caption{Degree Distributions in Signed Social Networks.}
\vspace{-0.1in}
\label{fig:degree-distributions}
\end{figure}

\subsection{Reciprocal Links in Signed Social Networks}

Links in directed social networks can be generally categorized into reciprocal (two-way) and parasocial (one-way) links~\cite{scott2012social}. Reciprocal links among nodes in unsigned networks are usually treated as the basis to create stable social ties and play an important role in the formation and evolution of networks~\cite{li2017understanding}. In this subsection, we study reciprocal links in signed social networks.  

For a pair of users $(u_i, u_j)$, there are four types of reciprocal links -- $(u_i + u_j, u_j + u_i)$, $(u_i + u_j, u_j - u_i)$, $ (u_i - u_j, u_j - u_i)$ and $(u_i - u_j, u_j + u_i)$, where $u_i + u_j$ (or $u_i - u_j$) denotes that there is a positive link (or a negative link) from $u_i$ to $u_j$. We checked our four signed networks and found that among four types of reciprocal links, there are few $(u_i + u_j, u_j - u_i)$ and $(u_i - u_j, u_j + u_i)$. Therefore, our analysis on reciprocal links focuses on $(u_i + u_j, u_j + u_i)$ and $ (u_i - u_j, u_j - u_i)$.  We calculate if $u_i$ has a positive link (or a negative link) to $u_j$, how likely $u_j$ also has a positive link (or a negative link) to $u_i$. The results on four signed networks are shown in Table~\ref{tab:reciprocal-links}. 

\begin{table}
\begin{center}
\caption{Reciprocal Links in Signed Social Networks.}
\label{tab:reciprocal-links}
\begin{tabular}{c|c|c}
	\hline
	 Datasets         &Positive Links  &Negative Links  \\	\hline
	 Bitcoin-Alpha  & 85.4\% & 18.0\% \\	
	 Bitcoin-OTC   & 83.8\%  & 17.8\% \\	
	 Slashdot   & 30.7\%  & 7.4\% \\	
	 Epinions  &34.8\% &3.8\%  \\	\hline
\end{tabular}

\end{center}
\end{table}

From the table, we make the following observations: 
\begin{itemize}
\item The percent of reciprocal positive links is much higher than that of reciprocal negative links in all four signed social networks; 
\item Though in all four websites, positive links are always visible to the public, the percent of reciprocal positive links in Bitcoin-Alpha and Bitcoin-OTC is much higher than that in Slashdot and Epinions. Users in Bitcoin Alpha and OTC exchange bitcoins with others; while users share free content (news or reviews) with others in Slashdot and Epinions. Thus, Bitcoin Alpha and OTC users need much stronger social ties for bitcoin trading in the online worlds than users in Slashdot and Epinions to consume online free content; and 
\item The percent of reciprocal negative links in Bitcoin-Alpha and Bitcoin-OTC is much higher than that in Slashdot, where the percent of reciprocal negative links in Slashdot is much higher than that in Epinions. Four websites have different access controls to negative links. In Bitcoin Alpha and OTC, negative links are totally visible to the public; only users who login to the Slashdot can see negative links; while negative links are totally private in Epinions. Exposing negative links may cause revenges that consequently could lead to more reciprocal negative links~\cite{szell2010multirelational}.  
\end{itemize}

\subsection{Balance Theory in Signed Networks}

Social theories such as homophily~\cite{mcpherson2001birds} play an important role in building node relevance measurements for unsigned social networks~\cite{liben2007link}. In this subsection, we investigate one of the most fundamental social theories related to signed social networks, i.e., balance theory~\cite{cartwright1956structural}, that could be helpful in building node relevance measurements in signed social networks.  

Generally, balance theory is based on the intuition that ``the friend of my friend is my friend" and ``the enemy of my enemy is my friend"~\cite{cartwright1956structural}.  We adopt $s_{ij}$ to denote the link sign between two users $u_i$ and $u_j$ where $s_{ij} = 1$ (or $s_{ij} = -1$) if the positive (or negative) link between $u_i$ and $u_j$. Balance theory suggests that a circle is balanced if there are even number of negative links. We typically focus on triads (or 3-circles)~\cite{leskovec2010predicting}. A triad of three users $(u_i, u_j, u_k)$ is balanced if $s_{ij} = 1$ and $s_{jk} = 1$, then $s_{ik} = 1$; or $s_{ij} = -1$ and $s_{jk} = -1$, then $s_{ik} = 1$. Therefore, for a triad, there are four possible sign combinations $(+,+,+)$, $(+,+,-)$, $(+,-,-)$ and $(-,-,-)$, while only  $(+,+,+)$ and $(+,-,-)$ are balanced.  Note that balance theory is only applicable to undirected signed network, we ignore the link directions when applying it to directed signed networks following the discussions in~\cite{leskovec2010predicting}. We count each of the four sign combinations and find that $92.0\%$, $91.5\%$, $94.5\%$ and $92.4\%$ of triads in Bitcoin-Alpha, Bitcoin-OTC, Slashdot and Epinions are balanced, respectively. 

\subsection{Discussions}

We summarize the observations from the above preliminary data analysis as below: 
\begin{itemize}
\item Properties of negative links could be different from positive links, which makes signed social networks be distinct from unsigned social networks. Therefore, though node relevance measurements have been extensively studied, it still needs dedicated efforts to systematically investigate signed relevance measurements. 
\item Most of triads in signed social networks satisfy balance theory. Thus, it can guide us to build advanced and novel signed relevance measurements. 
\end{itemize}

\section{Signed Node Relevance Measurements }
 
Node relevance measurements have been extensively studied in unsigned networks.  According to our preliminary data analysis in the last section, the availability of negative links makes signed networks unique in many aspects such as properties and balance theory. In this section, analogous to unsigned networks, we develop node relevance measurements for signed networks. 

\subsection{Notations and Definitions}

A signed network $\mathcal{G}$ is composed of a set of $N$ nodes (i.e., users) $\mathcal{U} = \{u_1, u_2, \dots, u_N\}$, a set of positive links $\mathcal{E}^+$ and a set of negative links $\mathcal{E}^-$. We represent signed links between users in an adjacency matrix, $\mathbf{A} \in \mathbb{R}^{N \times N}$, where $\mathbf{A}_{ij} = 1$ if $u_i$ has a positive link to $u_j$, $-1$ if $u_i$ creates a negative link to $u_j$, and $0$ when $u_i$ has no link to $u_j$. Furthermore, we can separate a signed network into two networks, one containing only positive links and the other with only negative links, which we can represent in the adjacency matrices $\mathbf{A}^+ \in \mathbb{R}^{N \times N}$ and $\mathbf{A}^- \in \mathbb{R}^{N \times N}$, respectively. We represent a positive link from $u_i$ to $u_j$ with $\mathbf{A}^+_{ij} = 1$ and  $\mathbf{A}^+_{ij} = 0$ otherwise. Similarly, we represent a negative link from $u_i$ to $u_j$ with $\mathbf{A}^-_{ij} = 1$ and  $\mathbf{A}^-_{ij} = 0$ otherwise.

We use $\mathbf{R} \in \mathbb{R}^{N \times N}$ to denote the relevance score matrix, where $\mathbf{R}_{ij}$ represents the node relevance from user $u_i$ to user $u_j$. Note that node relevance values are not necessarily symmetrical. We summarize the above notations in Table~\ref{tab:notations} where $d_i$ and $N_i$ denote degree and the set of neighbors of $u_i$ in an unsigned network. 

\begin{table}
\begin{center}
\caption{Notations.}
\label{tab:notations}
\resizebox{\columnwidth}{!}{
\begin{tabular}{|l|l|}
	\hline
	\label{tab:notations} 	 
	Notations & Descriptions\\
	\hline
	$\mathbf{R}$ & Node relevance matrix\\
	$\mathbf{A}$ & Adjacency matrix\\
	$\mathbf{A}^+ (\mathbf{A}^-)$ & Adjacency matrix of only positive(negative) links\\
	$|\mathbf{A}|$ & Absolute adjacency matrix \\
	$d_i$ & Degree of node $u_i$ \\
	$d^{in}_i (d^{out}_i$) & Indegree (Outdegree) of node $u_i$ \\
	$d^{in+}_i (d^{out+}_i)$ & Indegree (Outdegree) of positive links of node $u_i$ \\
	$d^{in-}_i (d^{out-}_i)$ & Indegree (Outdegree) of negative links of node $u_i$ \\
	$N_i$ & Set of neighbors for node $u_i$\\
	$N^{in}_i (N^{out}_i)$ & Set of incoming (outgoing) neighbors for node $u_i$ \\
	$N^{+}_i (N^{-}_i)$ & Set of positive (negative) neighbors for node $u_i$ \\
	$\mathbf{X}_{ij}$ & the (i,j) entry of the matrix $\mathbf{X}$\\
	\hline
\end{tabular}}
\end{center}
\end{table}

Many node relevance measurements have been proposed for unsigned networks. According to the used information, we can roughly divide them to local and global measurements. Local measurements only use local neighborhood information such as common neighbors; while global measurements utilize the whole structural information such as Random Walk with Restart.  Meanwhile, node relevance measurements can be undirected and directed, corresponding to undirected and directed networks. Note that we could use any method that requires a directed network for an undirected network, since undirected networks are simply directed networks where each edge has both directions. In this work, we will group signed relevance measurements as local and global methods. 

With node relevance measurements for unsigned networks, there are three strategies to design signed ones. The first is to only use $\mathbf{A}^+$ in the calculation of node relevance scores. This strategy completely ignores the negative links that could result in over-estimation of the impact of positive links~\cite{tang2016survey}.  The second strategy would be to convert negative links in the signed network into positive links, thus making the signed network into an unsigned network. Such a network can be represented by the matrix $\tilde{\mathbf{A}}$ where $\tilde{\mathbf{A}}_{ij} = |\mathbf{A}_{ij}|$. Ignoring signs of links not only overlooks the differences between negative and positive links; but also makes balance theory for signed networks not applicable. Our third strategy is to take advantage of negative links and balance theory to develop signed relevance measurements based on unsigned ones. In the following subsections, we will detail how to apply the third strategy to representative unsigned node relevance measurements. 

\subsection{Local Methods}

In this subsection, we build local signed relevance measurements based on representative local methods for unsigned networks including common neighbors, Jaccard Index, and Preferential Attachment~\cite{lorrain1971structural,newman2010intro}. For each unsigned measurement, we will first briefly introduce it, then detail how to design the signed one and finally discuss its connection with signed network properties and balance theory. 

\subsubsection{Common neighbors}

\textbf{Unsigned Common neighbors (UCN)}: If two nodes share a lot of common friends, they are likely to be relevant. Based on this intuition, UCN defines the relevance score between $u_i$ and $u_j$ as the number of common neighbors, which is formally defined as:  
\begin{align}
	\mathbf{R}_{ij} = |N_i \cap N_j|
\end{align}
\noindent where $|x|$ denotes the size of the set $x$. 

\textbf{Signed Common neighbors (SCN)}: UCN cannot be directly extended to include negative links. Therefore, we define SCN as follows:
	\begin{align}
	\mathbf{R}_{ij} = (|N^+_i \cap N^+_j| + |N^-_i \cap N^-_j|) \\- (|N^+_i \cap N^-_j| + |N^-_i \cap N^+_j|) \nonumber
	\end{align}
	We can interpret SCN  as number of common neighbors of $u_i$ and $u_j$ where they agree on the polarity of the sign ($|N^+_i \cap N^+_j| + |N^-_i \cap N^-_j|$) and then subtracting the number of neighbors that they disagree on the sign ($|N^+_i \cap N^-_j| + |N^-_i \cap N^+_j|$). 
	
	{\it Connection to Balance Theory:} If $u_i$ and $u_j$ agree with the majority of the signs of their neighbors, i.e.,  $ (|N^+_i \cap N^+_j| + |N^-_i \cap N^-_j|) >  (|N^+_i \cap N^-_j| + |N^-_i \cap N^+_j|)$, then $\mathbf{R}_{ij}$ is positive which will lead to more balanced triads. Otherwise, they have more disagreements on the signs, i.e., $ (|N^+_i \cap N^-_j| + |N^-_i \cap N^+_j|) > (|N^+_i \cap N^+_j| + |N^-_i \cap N^-_j|)$, then $\mathbf{R}_{ij}$ is negative, which will also result in more balanced triads. Therefore, SCN aims to force more triads with $u_i$ and $u_j$ to be balanced. 
	
\subsubsection{Jaccard Index} 

\textbf{Unsigned Jaccard Index (UJI)}: UCN only considers the number of common neighbors of $u_i$ and $u_j$, but it ignores the number of unique neighbors these two users have. Therefore, UCN is likely to give users with large numbers of neighbors high relevance scores. To mitigate such effect, UJI penalizes the UCN scores by the number of unique neighbors two users have as: 
	\begin{align}
		\mathbf{R}_{ij} = \frac{|N_i \cap N_j|}{|N_i \cup N_j|} 			
	\end{align}			

\textbf{Signed Jaccard Index (SJI)}:  Similar to from UCN to UJI, SJI is defined as SCN divided by the total number of unique neighbors $u_i$ and $u_j$ have: 
\begin{align}
\mathbf{R}_{ij} = \frac{SCN_{ij}}{|N_i^+ \cup N_i^- \cup N_j^+ \cup N_j^-|}
\end{align}

{\it Connection to Balance Theory}: Similar to SCN, SJI targets to force more triads balanced. 

\subsubsection{Preferential Attachment}

\textbf{ Unsigned Preferential Attachment (UPA)}: One commonly used interpretation behind this method, taken from the finance realm, is that the rich gets richer. In terms of social network analysis, users that already have many friends are more likely to create new friends in the future. Therefore, the node relevance score of UPA is to multiply the degrees of the two users~\cite{barabasi1999pa}.
\begin{align}
	\mathbf{R}_{ij} = d_i \times d_j
\end{align}
		
\textbf{Signed Preferential Attachment (SPA)}:  In the Section~\ref{data_analysis}, we demonstrate that both positive and negative links follow the power-law distributions. In other words, we observe ``the rich getting richer'' for both positive and negative links, which paves us a way to define SPA.  We first split the network from $\mathbf{A}$ to a positive network $\mathbf{A}+$ and a negative network $\mathbf{A}^-$. Then we can use UPA to calculate relevance scores from the positive and negative networks, separately, since degrees in both networks follow power-law distributions.  The relevance score for $i$ and $j$ from $\mathbf{A}^+$ is denoted as $UPA_{ij}^+$ and similarly we denote the relevance as $UPA_{ij}^-$ from $\mathbf{A}^-$.  $UPA_{ij}^+$ and $UPA_{ij}^-$ are computed as: 
	\begin{align}
		UPA^+_{ij}=d_i^{+} \times d_j^{+},~~~~UPA^-_{ij}=d_i^{-} \times d_j^{-} \nonumber
	\end{align}

Then we define SPA between $u_i$ and $u_j$ as: 
	\begin{align}
	    \mathbf{R}_{ij} = sign ( UPA^{+}_{ij} - UPA^{-}_{ij} ) f ( UPA^{+}_{ij} , UPA^{-}_{ij})
	\end{align}		 
\noindent where $sign(x) = $ 1, 0, or -1 if $x$ is larger, equal or smaller than $0$. Intuitively, if the positive relevance score $UPA^{+}_{ij}$ is larger than the negative one $UPA^{-}_{ij}$, the overall $\mathbf{R}_{ij}$ should be positive; otherwise,  $\mathbf{R}_{ij}$ should be negative. Therefore the sign of $\mathbf{R}_{ij}$ is decided by $sign ( UPA^{+}_{ij} - UPA^{-}_{ij} )$. The relevance strength $|\mathbf{R}_{ij} |$ is to aggregate $UPA^{+}_{ij}$ and $UPA^{-}_{ij}$ via a function $f$.  A straightforward way is to set $ f ( UPA^{+}_{ij} , UPA^{-}_{ij}) = | UPA^{+}_{ij} - UPA^{-}_{ij}|$.  It may not work well. For example, when $u_i$ and $u_j$ have both larger positive and negative degrees, positive and negative relevance scores will cancel each other, which contradicts with `` the rich getting richer".  Actually we empirically find that $ f ( UPA^{+}_{ij} , UPA^{-}_{ij}) = \max(UPA^{+}_{ij}, UPA^{-}_{ij})$ works better than $ f ( UPA^{+}_{ij} , UPA^{-}_{ij}) = | UPA^{+}_{ij} - UPA^{-}_{ij}|$. 

{\it Connection to the signed network property}:  According to the power-law distributions of positive and negative links, we design SPA, which will allow users with higher degrees to have higher relevance scores with others.

\subsection{Global Methods}
The global methods make use of not only the local neighborhoods, but also allow for the propagation of relevance information to pass through the whole network. Most of the global methods for unsigned networks assume that two users $u_i$ and $u_j$ should have high relevance if they have neighbors with high relevance. In this subsection, we detail how to design global signed relevance measurements based on representative unsigned ones and then connect them to balance theory.

\subsubsection{Katz}
     \textbf{Unsigned Katz (UK) }: This method sums over the collection of all paths from $i$ to $j$ and has an exponential decay on the weight associated with the count of paths as the length increases~\cite{katz1953sim}: 
	\begin{align}
		\label{eq:unsigned_katz}
		\mathbf{R}_{ij}& = \sum\limits_{l=1}^{\infty} \beta^{l} \cdot |\text{paths}^{l}_{i,j}| = \sum\limits_{l=1}^{\infty} \beta^{l} \mathbf{A}^l
	\end{align}
	 where $|\text{paths}^{l}_{i,j}|$ is the count of paths of length $l$ from $i$ to $j$. Note that we should have $\beta < 1$ so that longer paths will be assigned less weight than shorter paths. This can be formulated recursively as follows to handle the counting of the paths of varying length:
	\begin{align}
		\mathbf{R}_{ij} = \frac{\beta}{d_x} \sum\limits_{k=1}^{N}\mathbf{A}_{ik}  \mathbf{R}_{kj} + \delta_{ij} 	
	\end{align}
	Note that $\delta_{ij}$ is used to ensure that every node in the network has a high relevance to themselves (i.e., ``self-similarity''). It is a diagonal term and is defined as $\delta = \mathbf{I}$. It normalizes the relevance scores from each user $u_i$ based on the degree $d_i$.

	\textbf{Signed Katz (SK):} Balance theory states that a k-cycle in a signed social network is balanced if it contains an even number of negative edges and unbalanced if it contains an odd number of negative edges. With relevance scores from SK, we expect more balanced k-cycles than unbalanced ones involving users $i$ and $j$. To achieve this, we would therefore need to choose the sign of the node relevance $\mathbf{R}_{ij}$ to be either positive or negative, such that it optimizes over all the cycles involving $i$ and $j$ (i.e., all the paths between $i$ and $j$). As done in UK, we also can similarly allow the decay of importance on the longer paths. Our formulation is shown below with its recurrence relation for the calculation of paths of length $l$ having an even or odd number of negative edges.
	
	\begin{align}
		\label{eq:signed_katz}
		\mathbf{R} = \sum\limits_{l=1}^{\gamma} \beta^l f(\mathbf{B}_l,\mathbf{U}_l)
	\end{align}	
	where
	\begin{align}
		\mathbf{B}_l = \mathbf{B}_{l-1}\mathbf{A}^+ + \mathbf{U}_{l-1}\mathbf{A}^- \nonumber \\ \nonumber
		\mathbf{U}_l = \mathbf{B}_{l-1}\mathbf{A}^- + \mathbf{U}_{l-1}\mathbf{A}^+ \\ \nonumber
		\mathbf{B_1} = \mathbf{A^+}, ~~~~ \mathbf{U_1} = \mathbf{A^-} \nonumber \nonumber
	\end{align}
\noindent where $f(\mathbf{B}_l,\mathbf{U}_l)$ is a function to combine the counts of paths with even and odd number of negative links. $\mathbf{B}_l$ and $\mathbf{U}_l$ are the matrices that hold the number of paths with an even and odd number of negative links in paths of length $l$, respectively.  Next we will discuss the inner working of SK.  When counting paths of length 1 (i.e., a direct edge connecting the two nodes),  we set $\mathbf{B}_1$ as $\mathbf{A}^+$ since having a positive edge is trivially having an even number of negative links in a path of length 1, and similarly reasoned for initializing $\mathbf{A}^-$.  We assume that $\mathbf{B}_{l-1}$ and $\mathbf{U}_{l-1}$ represent the paths of length $l-1$ having an even and odd number of negative edges, respectively, between all pairs of nodes.  Adding one positive link ($\mathbf{A}^+$) to a path in $\mathbf{B}_{l-1}$ or adding a negative link ($\mathbf{A}^-$ ) to a path in $\mathbf{U}_{l-1}$ will result in a path of length $l$ with an even number of negative links. This intuition leads to the update rule of  $\mathbf{B}_l = \mathbf{B}_{l-1}\mathbf{A}^+ + \mathbf{U}_{l-1}\mathbf{A}^-$. Similarly, we can obtain the update rule of $\mathbf{U}_l = \mathbf{B}_{l-1}\mathbf{A}^- + \mathbf{U}_{l-1}\mathbf{A}^+$. 

\begin{theorem}
\label{th:signed_katz}
When we choose $f(\mathbf{B}_l,\mathbf{U}_l) = (\mathbf{B}_l - \mathbf{U}_l)$ and $\mathbf{A} \in \mathbb{R}^{N \times N}$, where $\mathbf{A}_{ij} = 1$ if $u_i$ has a positive link to $u_j$, $-1$ if $u_i$ creates a negative link to $u_j$, and $0$ when $u_i$ has no link to $u_j$, signed Katz in Eq~(\ref{eq:signed_katz}) is equivalent to applying unsigned Katz in Eq~(\ref{eq:unsigned_katz}) on the signed network adjacency matrix defined as $\mathbf{A}$.
\end{theorem}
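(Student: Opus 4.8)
The plan is to prove the identity by induction on the path length $l$, showing that $\mathbf{B}_l - \mathbf{U}_l = \mathbf{A}^l$ for every $l \geq 1$, and then summing the resulting series term by term. The one observation I would make explicit at the outset is that, under the stated sign convention, the signed adjacency matrix decomposes as $\mathbf{A} = \mathbf{A}^+ - \mathbf{A}^-$, where $\mathbf{A}^+$ and $\mathbf{A}^-$ are the $0/1$ indicator matrices of positive and negative links as in the Notations table. This decomposition is the bridge between the two formulations, and it is the place where I would be most careful, since the paper overloads the symbols $\mathbf{A}^{\pm}$ (they are $0/1$ here, not signed).

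For the base case $l = 1$, the initializations $\mathbf{B}_1 = \mathbf{A}^+$ and $\mathbf{U}_1 = \mathbf{A}^-$ give $\mathbf{B}_1 - \mathbf{U}_1 = \mathbf{A}^+ - \mathbf{A}^- = \mathbf{A} = \mathbf{A}^1$. For the inductive step, assuming $\mathbf{B}_{l-1} - \mathbf{U}_{l-1} = \mathbf{A}^{l-1}$, I would substitute the two recurrences, group the $\mathbf{A}^+$ and $\mathbf{A}^-$ contributions, and invoke distributivity of matrix multiplication:
\begin{align}
\mathbf{B}_l - \mathbf{U}_l &= (\mathbf{B}_{l-1}\mathbf{A}^+ + \mathbf{U}_{l-1}\mathbf{A}^-) - (\mathbf{B}_{l-1}\mathbf{A}^- + \mathbf{U}_{l-1}\mathbf{A}^+) \nonumber \\
&= (\mathbf{B}_{l-1} - \mathbf{U}_{l-1})(\mathbf{A}^+ - \mathbf{A}^-) = \mathbf{A}^{l-1}\mathbf{A} = \mathbf{A}^l. \nonumber
\end{align}
This closes the induction. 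Intuitively, this is just the statement that an entry of $\mathbf{A}^l$ signed-counts walks of length $l$, contributing $+1$ when the walk contains an even number of negative edges and $-1$ when it contains an odd number, which is exactly the bookkeeping that $\mathbf{B}_l$ and $\mathbf{U}_l$ perform separately.

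Finally, substituting $f(\mathbf{B}_l,\mathbf{U}_l) = \mathbf{B}_l - \mathbf{U}_l = \mathbf{A}^l$ into Eq.~(\ref{eq:signed_katz}) yields $\mathbf{R} = \sum_{l=1}^{\gamma} \beta^l \mathbf{A}^l$, which is precisely the unsigned Katz expression of Eq.~(\ref{eq:unsigned_katz}) evaluated at the signed matrix $\mathbf{A}$. I do not anticipate a genuine obstacle: beyond stating the decomposition $\mathbf{A} = \mathbf{A}^+ - \mathbf{A}^-$ correctly, the only things to flag are that the equivalence is term-by-term, so the finite cutoff $l \le \gamma$ in SK corresponds to truncating the infinite Katz series (and agrees with it as $\gamma \to \infty$ whenever $\beta$ is small enough for convergence), and that the normalized recurrence form with the $1/d_x$ factor and $\delta_{ij}$ is a cosmetic variant of Eq.~(\ref{eq:unsigned_katz}) that does not affect this structural identity.
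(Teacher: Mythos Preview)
Your proposal is correct and follows essentially the same route as the paper: both reduce the claim to the identity $\mathbf{B}_l - \mathbf{U}_l = \mathbf{A}^l$, proved by induction on $l$ using $\mathbf{A} = \mathbf{A}^+ - \mathbf{A}^-$ for the base case and the factorization $(\mathbf{B}_{l-1}-\mathbf{U}_{l-1})(\mathbf{A}^+-\mathbf{A}^-)$ for the inductive step. Your extra remarks about the truncation at $\gamma$ and the normalized recurrence are helpful context but not part of the paper's argument.
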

\begin{proof}
	To prove the theorem, we only need to show that: $ \mathbf{B}_l - \mathbf{U}_l = {\bf A}^l $. We use mathematical induction as: \\
	Basis: Let $l=1$,  based on our definition of $\mathbf{B}_1$ and $\mathbf{U}_1$, we have $(\mathbf{B}_1 - \mathbf{U}_1) = (\mathbf{A}^+ - \mathbf{A}^-) = \mathbf{A} = \mathbf{A}^l$.\\
	Inductive Hypothesis: Suppose the theorem holds for $l=k$. In other words,  $(\mathbf{B}_k - \mathbf{U}_k) = {\bf A}^k$. \\
	Inductive Step: Let $l=k+1$. Then our left size is $(\mathbf{B}_{k+1} - \mathbf{U}_{k+1}) = \Big( (\mathbf{B}_{k}\mathbf{A}^+ + \mathbf{U}_{k}\mathbf{A}^-) - (\mathbf{B}_{k}\mathbf{A}^- + \mathbf{U}_{k}\mathbf{A}^+) \Big) = (\mathbf{B}_k - \mathbf{U}_k)(\mathbf{A}^+ - \mathbf{A}^-) = \mathbf{A}^k(\mathbf{A}) = \mathbf{A}^{k+1}$, which completes the proof. 
\end{proof}

{\it Connection to Balance Theory}: SK is built based on balance theory. SCN and SJI forces more balanced triads (or 3-cycles), while SK pushes more for any $l$-circles to be balanced. If the majority of paths between $i$ and $j$ have an even number of negative links, according to balance theory, we should have a positive node relevance between them. Similarly, when having an odd number of negative edges, we want to have a negative relevance. Therefore, if we count the number of paths between $i$ and $j$ with an even or odd number of negative edges, then we can subtract the number with an odd number of negative links from the number of paths having an even number of links, since this will give us the optimal choice of sign between $i$ and $j$ as mentioned above. More specifically, if the resulting value is positive, the node relevance between $i$ and $j$ is positive, otherwise negative.

\subsubsection{Asymmetric Similarity Measure for Weighted Networks}

\textbf{ Unsigned Asymmetric Similarity Measure for Weighted Networks (UASCOS++)}: This method is an enrichment of the ASCOS \cite{chen2013ascos} to handle weighted networks. The formulation of ASCOS is the following:
\begin{equation}
  \mathbf{R}_{ij} = \begin{cases}
	\frac{c}{|N^{in}_i|} \sum\limits_{k \in N^{in}_i} \mathbf{R}_{kj} & i \neq j \\
	1 & i = j
  \end{cases} \nonumber
\end{equation}

Let $\mathbf{P}_{ij} = \frac{\mathbf{A}_{ij}}{d^{in}_i} $  and  we can rewrite the formulation as:  
\begin{align}
	\mathbf{R} = c\mathbf{P}^{\top}\mathbf{R} + (1-c)\mathbf{I} \nonumber
\end{align}

It defines the node relevance as the summation of normalized relevance from the incoming neighbors of $i$ to $j$. The modifications for ASCOS++ were performed to handle weights on the edges. The formulation is shown below:
\begin{equation}
  \mathbf{R}_{ij} = \begin{cases}
	c\sum\limits_{k \in N^{in}_i} \frac{\mathbf{A}_{ik}}{\sum\limits_{q \in N^{in}_i} \mathbf{A}_{iq}} (1-e^{-A_{ik}})  \mathbf{R}_{kj} & i \neq j \\
	1 & i = j
  \end{cases} 
\end{equation}
	The adjustment is that they now normalize each of the edge weights coming into $i$ by the summation of all the incoming weights into $i$. The term $(1-e^{-A_{ik}})$ maps the weights to be close to 1 when edge weights are large, and when the weights are small,  it maps them close to 0. 
	
\textbf{Signed ASCOS++ (SASCOS++)}: ASCOS++ has difficulties to directly adapt to signed networks.  Assume that a node $i$ has an even number of incoming edges, where half the edges are positive, while the other half are negative. Therefore, this would lead to an undefined value as the summation over all incoming edges to $i$ $\sum\limits_{q \in N^{in}_i} \mathbf{A}_{iq}$ is zero.	
	
Another issue is if we directly apply ASCOS++, the resulting relevance score could contradict with balance theory.   To ease our analysis in the following case, let $\kappa = \sum\limits_{q \in N^{in}_i} \mathbf{A}_{iq}$, $\lambda = \frac{\mathbf{A}_{ik}}{\kappa}$ and $\mu = (1-e^{-A_{ik}})$.  If $\mathbf{A}_{ik}=1$ and $\kappa$ is negative, hence $\lambda$ is negative and $\mu$ is positive. Thus, if $\mathbf{R}_{kj}$ is also positive, then the product of these three terms  $ \mathbf{R}_{ij}$ is negative and the resulting triad ($+$, $+$, $-$) does not follow balance theory. Similarity, when $\mathbf{R}_{kj}$ is negative, the product is positive and the resulting triad ($+$,$-$,$+$) is also not balanced. 

Due to the fact using ASCOS++ with signed networks, could inherently disagree with balance theory, which motivates us to build SASCOS++.  We note that when using ASCOS++ with signed networks, $\mu$ is equal to approximately 0.63 and -1.72 when $\mathbf{A}_{ik}$ is positive or negative, respectively. Thus, it is providing a stronger push in the similarity (by about three times) when seeing a negative link. Due to the imbalance of the numbers of positive and negative links in signed networks, we leave this $\mu$ term as is, but make a change to the normalization (i.e., $\kappa$). The formulation for SASCOS++ is shown below:
\begin{equation}
  \mathbf{R}_{ij} = \begin{cases}
	c\sum\limits_{k \in N^{in}_i} \frac{\mathbf{A}_{ik}}{\sum\limits_{q \in N^{in}_i} |\mathbf{A}_{iq}|} (1-e^{-A_{ik}})  \mathbf{R}_{kj} & i \neq j \\
	1 & i = j
  \end{cases} 
\end{equation}
{\it Connection to Balance Theory}: It is easy to verify that SASCOS++ is able to have the relevance measurements aligning with balance theory.	In other words, it will push more balanced triads. 
	
\subsubsection{Random Walk with Restart}
	\textbf{ Unsigned Random Walk with Restart (URWR)}:  A random walker starting on node $i$ that has a probability of $(1-c)$ to return to $i$ and with probability $c$ chooses a neighbor of the current node to move to based on a transition matrix $\mathbf{W}$ (where $\mathbf{W}_{ij}$ is the probability that the walker starting at $i$ will end at node $j$). We define this transition matrix as $\mathbf{W}_{ij} = \frac{1}{d_{i}}$ if $i$ and $j$ are connected and $\mathbf{W}_{ij} = 0$ otherwise (i.e., no link between $i$ and $j$). With the intuition, URWR is formulated as~\cite{tong2006fast}: 
	\begin{align}
		\mathbf{R} = c\mathbf{W} \mathbf{R} + (1 - c) {\bf I}
		= (1 - c)(\mathbf{I} - c\mathbf{W}^\top)^{-1}
	\end{align}

\textbf{Signed Random Walk with Restart (SRWR)}: The transition matrix $\mathbf{W}$ has to be non-negative, thus we cannot directly apply URWR to signed networks.  Therefore, we study signed random walk with restart.  Based on balance theory, the relevance score of $u_k$ w.r.t $u_i$ can be useful to infer that of $u_j$ to $u_i$ if there's a link from $u_k$ to $u_j$. For example, if $\mathbf{A}_{kj}>0$ (or $u_k$ and $u_j$ are friends), and $\mathbf{R}_{ik}>0$ (or $u_i$ and $u_k$ are likely to be friends), it may suggest that $u_i$ and $u_j$ are friends (or $\mathbf{R}_{ij}>0$) because friends' friends are friends. On the contrary, if $\mathbf{A}_{kj}<0$ (or $u_k$ and $u_j$ are enemies) but $\mathbf{R}_{ik}>0$ (or $u_i$ and $u_k$ are likely to be friends), it may indicate that $u_i$ and $u_j$ are enemies (or $\mathbf{R}_{ij}<0$) because friends' enemies are enemies, which is implied from ``the enemy of my enemy is my friend". This indicates that (1) $u_j$'s relevance score to $u_i$ can be indicated by these of nodes (e.g., $u_k$) that have links to ${u}_j$; and (2) the estimation also depends on the signs of links from $u_k$ to $u_j$ and the relevance scores from $u_i$ to $u_k$. These intuitions suggested by balance theory pave us a way to build SRWR.  Let $\bar{\mathbf{D}}$ be a diagonal matrix with its diagonal element $\bar{\mathbf{D}}_{ii}$ given as
\vskip -1.2em
\begin{equation}
    \bar{\mathbf{D}}_{ii} = \sum_{k} |\mathbf{A}_{ik}| \nonumber
\end{equation}
\vskip -0.2em
Apparently, $\bar{\mathbf{D}}_{ii}$ is the out degree of $u_i$ considering both positive and negative links. Thus, the normalized weight of the link from $u_i$ to $u_k$ is given as
\vskip -1em
\begin{equation}
    \bar{\mathbf{W}}_{ik} = \frac{|\mathbf{A}_{ik}|}{\bar{\mathbf{D}}_{ii}} \nonumber
\end{equation}

According to aforementioned intuitions, $\mathbf{R}_{ik}$ can be used to estimate $\mathbf{R}_{ij}$ with $\mathbf{A}_{kj} \ne 0$. Intuitively the portion of relevance score of $u_k$ contributes to $\mathbf{R}_{ij}$ should be weighted by $\bar{\mathbf{W}}_{ij}$. This is to account for the number of neighbors of $u_k$. If $\bar{\mathbf{D}}_{ii}$ is large, then $\bar{\mathbf{W}}_{ij}$ is small and the effects of $u_i$ to each of its neighbor is small. Thus, $\mathbf{R}_{ij}$ can be estimated as:
\begin{equation}
\label{eq:Rij_orig}
	\mathbf{R}_{ij} \propto \sum_{k} sign(\mathbf{A}_{kj}) \bar{\mathbf{W}}_{kj} \mathbf{R}_{ik}
\end{equation}
where $sign(\mathbf{A}_{kj})$ is used to encode the impact of the sign of the links. With sign introduced in the estimation of $\mathbf{R}_{ij}$, the relevance score can be both positive and negative. Two users with negative links can affect each other with negative relevance scores and thus can capture the semantic meanings of signed links.

With the analysis above, we are ready to discuss the details of SRWR. We focus on the relevance score of $u_j, j=1,\dots,n, j\ne i$ w.r.t $u_i$ since the relevance scores w.r.t other nodes can be derived similarly. Firstly, $\mathbf{R}_{ij}, j=1,\dots,n, j\ne i$, are initialized to 0, which means that the relevance scores of $u_j$ to $u_i$ is unknown; while $\mathbf{R}_{ii}$ is initialized to $1$ because $u_i$ should be positively relevant to itself. Now considering that a random walker starting from $u_i$. It can iteratively transmit to its neighborhood through positive and negative outgoing links. Each time the walker arrives at a node $u_j$, it will update $\mathbf{R}_{ij}$ by the relevance scores of nodes that have links to $u_j$.  If the random walker arrives at $u_i$, then $\mathbf{R}_{ii}$ is updated as
\begin{equation}
\label{eq:Rii_orig}
	\mathbf{R}_{ii} \leftarrow c\sum_{k} sign(\mathbf{A}_{ki}) \bar{\mathbf{W}}_{ki} \mathbf{R}_{ik} + (1-c) * 1
\end{equation}
where the first term of the right-hand side of Eq.(\ref{eq:Rii_orig}) is the relevance score estimated from neighborhood, and the second term is to make sure that $\mathbf{R}_{ii}>0$, i.e., $u_i$ is relevant to itself. $c$ is a scalar between $0$ and $1$, which is used to control the contribution of the two parts. If the random walker arrives at $u_j, j \ne i$, $\mathbf{R}_{ij}$ is updated as
\begin{equation}
\label{eq:Rij_modi}
	\mathbf{R}_{ij} \leftarrow c\sum_{k} sign(\mathbf{A}_{kj}) \bar{\mathbf{W}}_{kj} \mathbf{R}_{ik}
\end{equation}
Combining Eq.(\ref{eq:Rij_orig}) and Eq.(\ref{eq:Rii_orig}) together, $\mathbf{R}_{ij}$ is updated as
\begin{equation}
	\mathbf{R}_{ij} \leftarrow c\sum_{k} sign(\mathbf{A}_{kj}) \bar{\mathbf{W}}_{kj} \mathbf{R}_{ik} + (1-c)\mathbb{I}(i,j) \nonumber
\end{equation}
where $\mathbb{I}(i,j)$ is a binary indicator function with $\mathbb{I}(i,j) = 1$ if $i=j$ and $0$ otherwise. The random walker keeps moving until $\mathbf{R}$ doesn't change, which gives
\begin{equation}
\label{eq:Rij}
	\mathbf{R}_{ij} = c\sum_{k} sign(\mathbf{A}_{kj}) \bar{\mathbf{W}}_{kj} \mathbf{R}_{ik} + (1-c)\mathbb{I}(i,j)
\end{equation}
By noticing that $sign(\mathbf{A}_{kj}) \bar{\mathbf{W}}_{kj} = \frac{\mathbf{A}_{kj}}{\bar{\mathbf{D}}_{kk}}$, we define $\mathbf{S}$ as
\begin{equation}
	\mathbf{S} = \bar{\mathbf{D}}^{-1} \mathbf{A}
\end{equation}
and then Eq.(\ref{eq:Rij}) can be written in matrix form as
\begin{equation}
    \mathbf{R} = c \mathbf{R} \mathbf{S} + (1-c) \mathbf{I}
\end{equation}
where $\mathbf{I}$ is the identity matrix. The solution to the above equation is given as
\begin{equation}
\label{eq:signed}
    \mathbf{R} = (1-c)(\mathbf{I} - c\mathbf{S})^{-1}
\end{equation}

\begin{figure}[!t]
\centering
	\subfigure[+++]{
		\label{fig:triplet1}
		\includegraphics[scale=0.2]{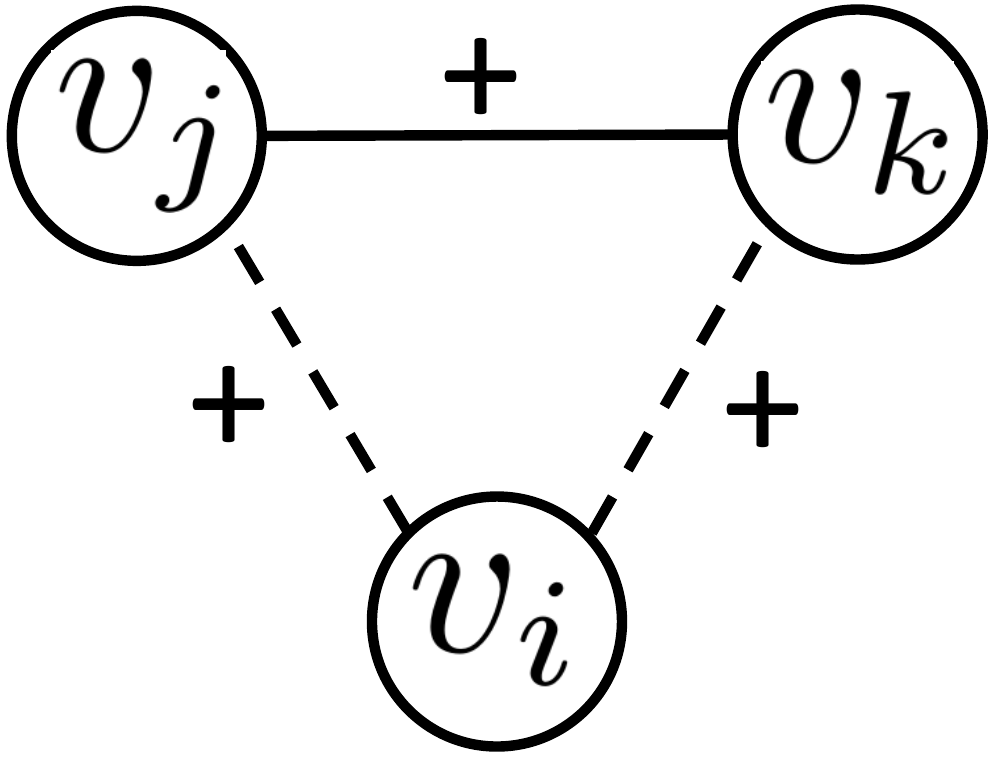}}
	\subfigure[+ - +]{
		\label{fig:triplet2}
		\includegraphics[scale=0.2]{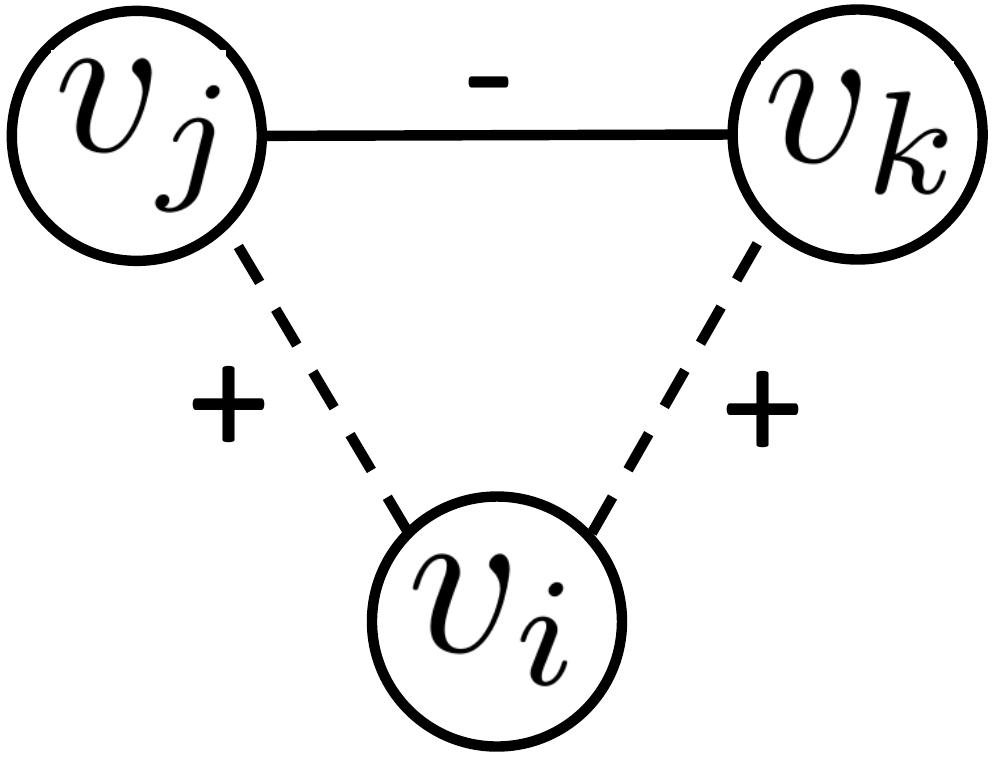}}
	\subfigure[++ -]{
		\label{fig:triplet3}
		\includegraphics[scale=0.2]{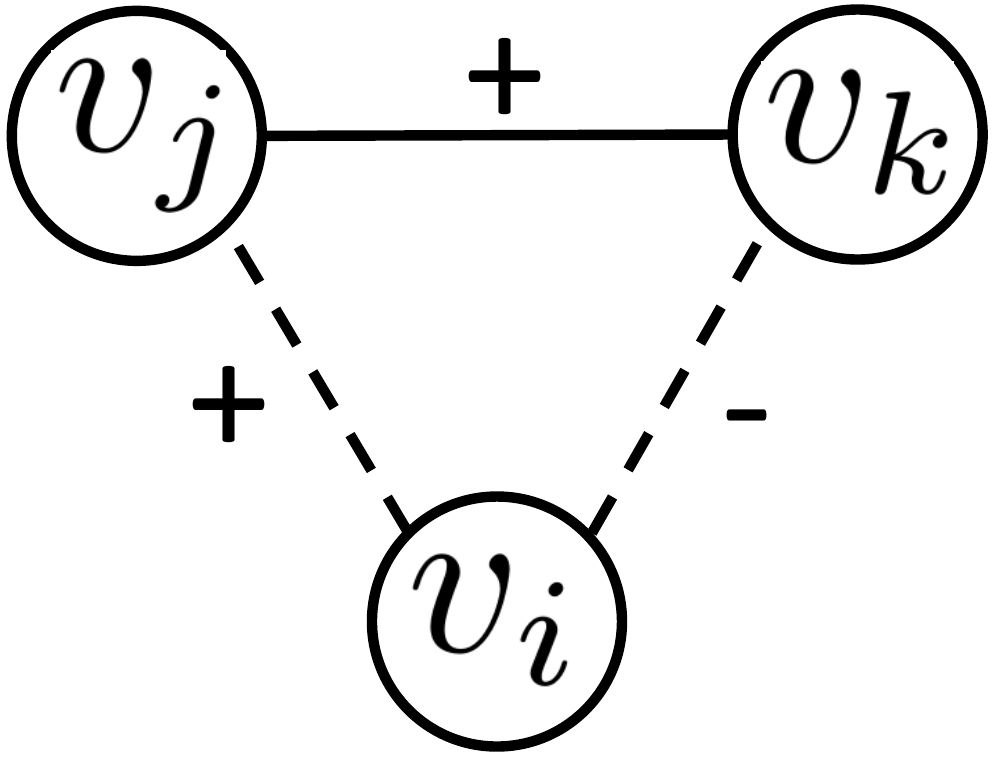}}
	\subfigure[+ - - ]{
		\label{fig:triplet4}
		\includegraphics[scale=0.2]{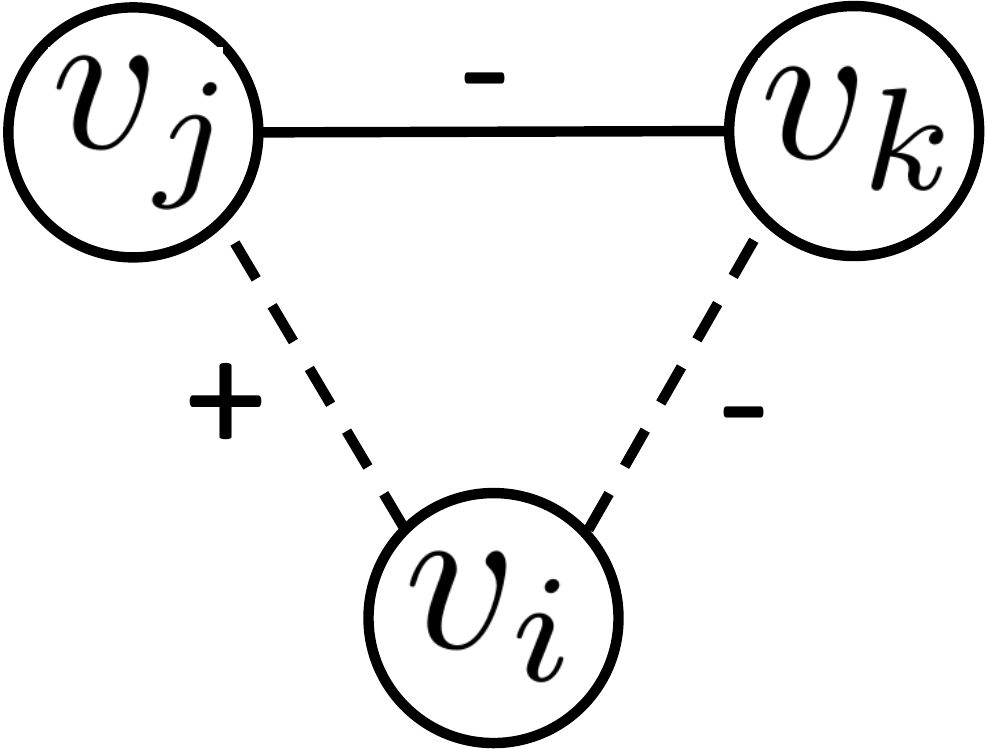}}
	\subfigure[- -+]{
		\label{fig:triplet6}
		\includegraphics[scale=0.2]{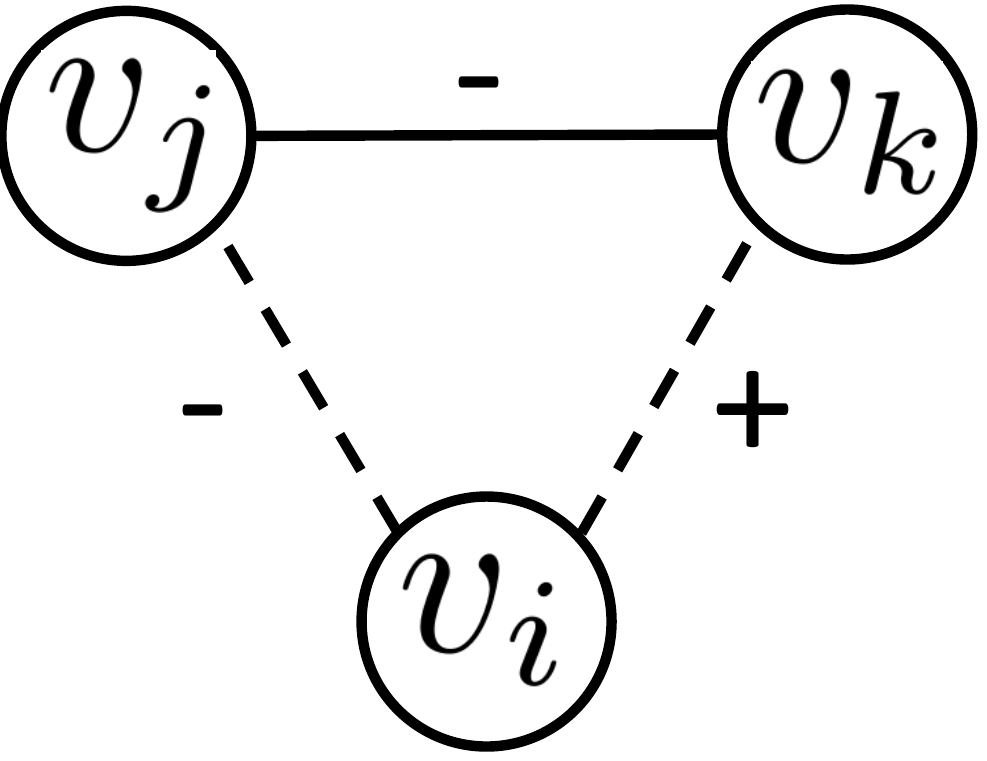}}
	\subfigure[- - -]{
		\label{fig:triplet5}
		\includegraphics[scale=0.2]{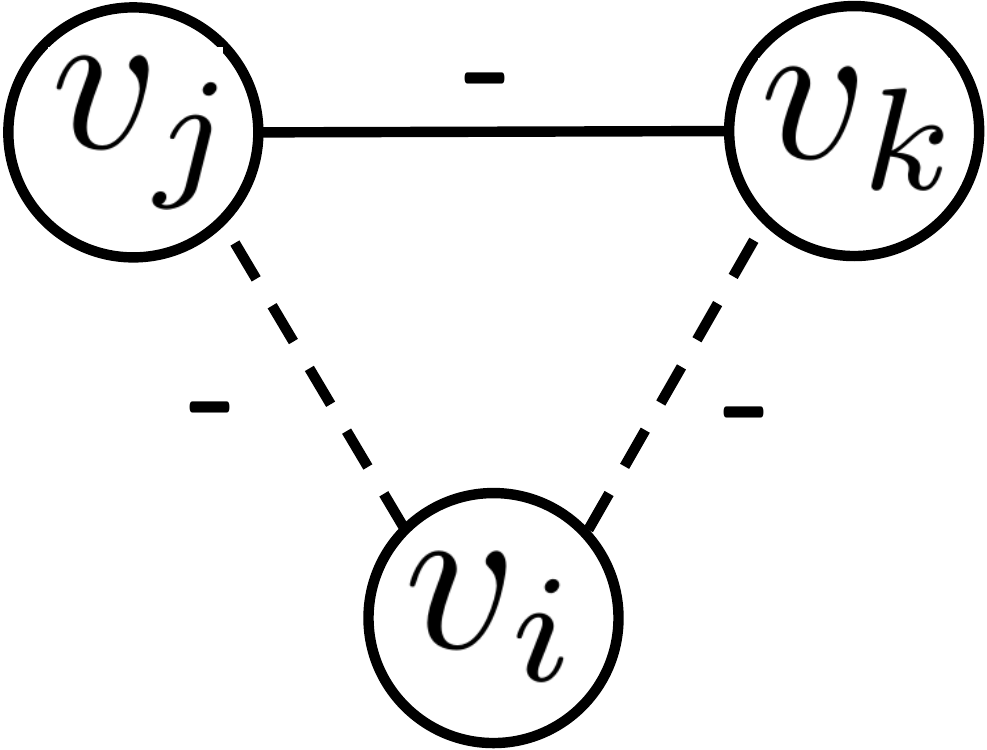}}
	\vskip -1em
	\caption{Triplets Encountered During Random Walk}
	\label{fig:triplets}
	\vskip -1em
\end{figure}

{\it Correctness}: Here we show that SRWR is correct, i.e., $(\mathbf{I} - c \mathbf{S})^{-1}$ exists. The existence of $(\mathbf{I} - c \mathbf{S})^{-1}$ can be proofed using the following lemma, which is known as Levy-Desplanques theorem~\cite{horn2012matrix}. The Levy-Desplanques theorem is stated as follows
\begin{lemma}
Let $\mathbf{P} \in \mathbb{R}^{n \times n}$ be a square matrix.If $|\mathbf{P}_{ii}| > \sum_{j \ne i}|\mathbf{P}_{ij}|$ for all $i=1,\dots,n$, then $\mathbf{P}$ is nonsingular.
\end{lemma}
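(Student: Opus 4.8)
The plan is to prove the contrapositive: assume $\mathbf{P}$ is singular and show that strict diagonal dominance must fail in some row. Since $\mathbf{P}$ is singular, its kernel is nontrivial, so there exists a vector $\mathbf{x} = (x_1,\dots,x_n)^\top \ne \mathbf{0}$ with $\mathbf{P}\mathbf{x} = \mathbf{0}$.

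First I would single out a coordinate of maximal modulus: choose an index $i \in \{1,\dots,n\}$ with $|x_i| = \max_{1 \le j \le n} |x_j|$. Because $\mathbf{x} \ne \mathbf{0}$, this maximum is strictly positive, so $|x_i| > 0$, and by construction $|x_j| \le |x_i|$ for every $j$. Next I would extract the $i$-th scalar equation from $\mathbf{P}\mathbf{x} = \mathbf{0}$, namely $\sum_{j=1}^{n} \mathbf{P}_{ij} x_j = 0$, and isolate the diagonal term as $\mathbf{P}_{ii} x_i = -\sum_{j \ne i} \mathbf{P}_{ij} x_j$. Taking absolute values and applying the triangle inequality gives $|\mathbf{P}_{ii}|\,|x_i| \le \sum_{j \ne i} |\mathbf{P}_{ij}|\,|x_j| \le |x_i| \sum_{j \ne i} |\mathbf{P}_{ij}|$, where the last inequality uses $|x_j| \le |x_i|$. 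Dividing both sides by $|x_i| > 0$ yields $|\mathbf{P}_{ii}| \le \sum_{j \ne i} |\mathbf{P}_{ij}|$, which contradicts the hypothesis $|\mathbf{P}_{ii}| > \sum_{j \ne i} |\mathbf{P}_{ij}|$ at row $i$. Hence no nonzero kernel vector can exist, and $\mathbf{P}$ is nonsingular.

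The argument is short, and the only point requiring care is the step $|x_j| \le |x_i|$ together with the division by $|x_i|$: both are legitimate precisely because $i$ was chosen to attain the maximum modulus and because $\mathbf{x} \ne \mathbf{0}$ forces that maximum to be strictly positive. I do not expect any real obstacle beyond keeping this bookkeeping straight.

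To close the loop on the correctness of SRWR, I would then apply the lemma with $\mathbf{P} = \mathbf{I} - c\mathbf{S}$. Since $\mathbf{S} = \bar{\mathbf{D}}^{-1}\mathbf{A}$ has absolute row sums equal to $1$ (because $\sum_k |\mathbf{A}_{ik}| = \bar{\mathbf{D}}_{ii}$), the off-diagonal absolute row sum of $\mathbf{P}$ in row $i$ is $c\bigl(1 - |\mathbf{S}_{ii}|\bigr)$, while the diagonal entry satisfies $|1 - c\mathbf{S}_{ii}| \ge 1 - c|\mathbf{S}_{ii}| > c - c|\mathbf{S}_{ii}| = c\bigl(1 - |\mathbf{S}_{ii}|\bigr)$ since $c < 1$. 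Thus $\mathbf{I} - c\mathbf{S}$ is strictly diagonally dominant, hence nonsingular by the lemma, so $(\mathbf{I} - c\mathbf{S})^{-1}$ exists and Eq.~(\ref{eq:signed}) is well defined.
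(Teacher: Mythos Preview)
Your proof of the lemma is correct and is in fact the standard argument for the Levy--Desplanques theorem; the paper itself does not prove this lemma but merely states it with a citation to~\cite{horn2012matrix}, so there is nothing to compare against on that front.

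Your closing paragraph goes beyond the lemma and reproves the content of Theorem~\ref{th:nonsingular}. The paper's version is slightly simpler because it uses the standing assumption that the signed adjacency has no self-loops, so $\mathbf{S}_{ii}=0$, giving $\mathbf{P}_{ii}=1$ and off-diagonal row sum exactly $c<1$. Your argument instead bounds $|1-c\mathbf{S}_{ii}|\ge 1-c|\mathbf{S}_{ii}|>c(1-|\mathbf{S}_{ii}|)$, which is more general (it tolerates $\mathbf{S}_{ii}\neq 0$) but unnecessary here. Both are valid; the paper's is shorter, yours is more robust.
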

\noindent{}Based on the above lemma, we have
\begin{theorem}
\label{th:nonsingular}
	$\mathbf{I} - c \mathbf{S}$, $ 0 < c < 1$, is non-singular.
\end{theorem}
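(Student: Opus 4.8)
The plan is to apply the Levy-Desplanques theorem (the lemma stated just above) directly to the matrix $\mathbf{P} = \mathbf{I} - c\mathbf{S}$; that is, to verify that $\mathbf{P}$ is strictly diagonally dominant, $|\mathbf{P}_{ii}| > \sum_{j \ne i} |\mathbf{P}_{ij}|$ for every $i$, and conclude nonsingularity. First I would unfold the entries of $\mathbf{P}$ from the definition $\mathbf{S} = \bar{\mathbf{D}}^{-1}\mathbf{A}$, which gives $\mathbf{S}_{ij} = \mathbf{A}_{ij}/\bar{\mathbf{D}}_{ii}$ with $\bar{\mathbf{D}}_{ii} = \sum_k |\mathbf{A}_{ik}|$. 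Assuming the signed network has no self-loops, so $\mathbf{A}_{ii} = 0$, the diagonal entries satisfy $\mathbf{P}_{ii} = 1 - c\,\mathbf{A}_{ii}/\bar{\mathbf{D}}_{ii} = 1$, and the off-diagonal entries are $\mathbf{P}_{ij} = -c\,\mathbf{A}_{ij}/\bar{\mathbf{D}}_{ii}$ for $j \ne i$.

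The one-line computation at the heart of the proof is the off-diagonal absolute row sum:
\[
\sum_{j \ne i} |\mathbf{P}_{ij}| \;=\; \frac{c}{\bar{\mathbf{D}}_{ii}} \sum_{j \ne i} |\mathbf{A}_{ij}| \;=\; \frac{c}{\bar{\mathbf{D}}_{ii}} \cdot \bar{\mathbf{D}}_{ii} \;=\; c,
\]
where the middle equality uses $\mathbf{A}_{ii}=0$ together with the definition of $\bar{\mathbf{D}}_{ii}$. Since $0 < c < 1 = |\mathbf{P}_{ii}|$, strict diagonal dominance holds in every row, and the lemma immediately yields that $\mathbf{P} = \mathbf{I} - c\mathbf{S}$ is nonsingular, which is exactly the theorem.

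The argument itself is essentially a direct substitution, so the only real obstacle is a degenerate case: $\bar{\mathbf{D}}$ must be invertible for $\mathbf{S}$ (hence $\mathbf{P}$) to be well defined, which requires $\bar{\mathbf{D}}_{ii} > 0$, i.e., every node has at least one outgoing positive or negative link. I would handle this either by assuming the graph has no out-isolated nodes (a harmless preprocessing step), or by adopting the convention that the $i$-th row of $\mathbf{S}$ is zeroed whenever $\bar{\mathbf{D}}_{ii} = 0$; in that case $\mathbf{P}_{ii} = 1$ and $\sum_{j\ne i}|\mathbf{P}_{ij}| = 0 < 1$, so diagonal dominance is preserved. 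As an alternative to invoking Levy-Desplanques, one could note that the maximum absolute row sum of $\mathbf{S}$ equals $1$, hence $\|c\mathbf{S}\|_\infty = c < 1$, so the Neumann series $\sum_{l \ge 0}(c\mathbf{S})^l$ converges and exhibits the inverse $(\mathbf{I}-c\mathbf{S})^{-1}$ explicitly; but since the paper already has the lemma in hand, the diagonal-dominance route is the cleanest.
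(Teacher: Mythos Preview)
Your proposal is correct and follows essentially the same route as the paper: set $\mathbf{P}=\mathbf{I}-c\mathbf{S}$, use $\mathbf{A}_{ii}=0$ to get $\mathbf{P}_{ii}=1$, compute the off-diagonal absolute row sum as $c$, and invoke the Levy--Desplanques lemma. Your added remarks on the degenerate case $\bar{\mathbf{D}}_{ii}=0$ and the Neumann-series alternative are sound extras that the paper omits.
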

\begin{proof}
	Let $\mathbf{P} = \mathbf{I} - c \mathbf{S}$. Since $\mathbf{S}_{ii} = 0$, we have $\mathbf{P}_{ii} = 1$. Also, $\sum_{j \ne i} |\mathbf{S}_{ij}| $ is given as
	\begin{equation}
		\sum_{j \ne i} |\mathbf{S}_{ij}| = \sum_{j} |\mathbf{S}_{ij}| = \sum_j \frac{|	\mathbf{A}_{ij}|}{\bar{\mathbf{D}}_{ii}} = \frac{\sum_j |\mathbf{A}_{ij}|}	{\bar{\mathbf{D}}_{ii}} = 1.
	\end{equation}
which leads to $\sum_{j \ne i}|\mathbf{P}_{ij}| = c \sum_{j \ne i} |\mathbf{S}_{ij}|=c$. Then we have $|\mathbf{P}_{ii}| > \sum_{j \ne i}|\mathbf{P}_{ij}|$ for all $i=1,\dots,n$. Thus, $\mathbf{I} - c \mathbf{S}$ is non-singular and  $(\mathbf{I} - c \mathbf{S})^{-1}$ exists.
\end{proof}

{\it Connection to balance theory}: Figure \ref{fig:triplets} gives representative triplets that will happen during the update process. The solid line with +/- means positive/negative links. The dashed line with +/- means $\mathbf{R}_{ij}>0/\mathbf{R}_{ij}<0$. According to the social balance theory~\cite{wasserman1994social}, the resulting triads in Figures \ref{fig:triplet1}, \ref{fig:triplet4} and \ref{fig:triplet6} are balanced while the remaining three are unbalanced. Next we show that SRWR is likely to keep the balanced structures while reducing unbalanced structures during the updating process. For example, in Figure \ref{fig:triplet1}, $\mathbf{R}_{ik}\mathbf{S}_{kj}>0$ will be added to $\mathbf{R}_{ij}$ according to Eq.~(\ref{eq:Rij_modi}), which increases the positive relevance score $\mathbf{R}_{ij}$.  However, in Figure \ref{fig:triplet2}, $\mathbf{R}_{ik}\mathbf{S}_{kj}<0$ will be added to $\mathbf{R}_{ij}$ that reduces the positive relevance score $\mathbf{R}_{ij}>0$. $\mathbf{R}_{ij}$ will be consistently reduced until $\mathbf{R}_{ij}$ becomes negative (or the triad becomes balanced). Following a similar process, we can give similar observations for other triads. Thus, SRWR actually tends to learn relevance scores that increase the structural balance of a given signed network.

\begin{table}
	\begin{center}
	\small
	\vskip -1em
	\caption{\label{tab:link_undirected}Performance comparison of link prediction under the undirected setting.}
		\begin{tabular}{|c|c|c|c|c|}	\hline
			Metrics		&	\begin{tabular}{@{}c@{}}Bitcoin- \\ Alpha\end{tabular}	&	\begin{tabular}{@{}c@{}}Bitcoin- \\ OTC\end{tabular}	&	Slashdot		&	Epinions \\ \hline		
			UCN-R		&	0.523			&	0.500		&	0.520		&	0.520 \\
			UCN-I		&	0.497			&	0.501		&	0.508		&	0.508 \\
			SCN			&	0.716			&	0.671		&	0.549		&	0.629 \\
			UJI-R		&	0.524			&	0.499		&	0.513		&	0.522 \\
			UJI-I		&	0.489			&	0.497		&	0.503		&	0.512 \\						
			SJI			&	0.725			&	0.669		&	0.550		&	0.630 \\
			UPA-R		&	0.587			&	0.497		&	0.571		&	0.634 \\
			UPA-I		&	0.475			&	0.481		&	0.484		&	0.498 \\
			SPA			&	0.628			&	0.559		&	0.641		&	0.634 \\ \hline
			UK-R			&	0.587			&	0.517		&	0.542		&	0.560 \\
			UK-I		&	0.482			&	0.488		&	0.498		&	0.538 \\
			SK			&	0.766	&	0.730		&	\textbf{0.693}	&	0.702 \\ 
			
			URWR-R		&	0.628			&	0.531		&	0.569		&	0.566 \\
			URWR-I		&	0.481			&	0.500		&	0.494			&	0.530 \\
			SRWR			&	\textbf{0.775}			&	0.751	&	0.677	&	0.703 \\
			UASCOS++-R	&	0.603			&	0.530		&	0.554		&	0.573 \\			
			UASCOS++-I	&	0.484			&	0.496		&	0.497		&	0.537 \\			
			SASCOS++		&	0.774			&	\textbf{0.765}	&	0.663		&	\textbf{0.705} \\ \hline
		\end{tabular}
	\end{center}

	\vskip 2em
	\begin{center}
	\small
	\vskip -1em
	\caption{\label{tab:link_directed}Performance comparison of link prediction under the directed setting.}
		\begin{tabular}{|c|c|c|c|c|}	\hline
			Metrics		&	\begin{tabular}{@{}c@{}}Bitcoin- \\ Alpha\end{tabular}	&	\begin{tabular}{@{}c@{}}Bitcoin- \\ OTC\end{tabular}	&	Slashdot		&	Epinions \\ \hline		
			UASCOS++-R	&	0.630			&	0.588		&	0.524		&	0.516 \\
			UASCOS++-I	&	0.639			&	0.562			&	0.519		&	0.493 \\			
			SASCOS++		&	0.705			&	0.644		&	0.578		&	0.580 \\			
			URWR-R		&	0.644			&	0.606		&	0.541		&	0.565 \\		
			URWR-I		&	0.590			&	0.556		&	0.500		&	0.563 \\
			SRWR			&	\textbf{0.809}	&	\textbf{0.791}	&	\textbf{0.627}	&	\textbf{0.687} \\	 \hline
		\end{tabular}
	\end{center}
\end{table}

\section{Experiment}

In this section, we investigate the impact of signed relevance measurements on two signed network analysis tasks, i.e., link prediction and tie strength prediction. We aim to answer the following two questions.  As mentioned in the last section, we can have three strategies to adapt unsigned measurements for signed networks -- (1) removing negative links; (2) ignoring signs; and (3) building advanced signed versions based on signed network properties and balance theory.  Note that in the following subsections, given an unsigned measurement ``X'', we use ``X-R" and ``X-I" to denote the corresponding measurements applicable to signed networks by removing negative links and ignoring signs, respectively.  For example, ``UCN-R" and ``UCN-I'' denote the strategies of adapting ``UCN'' to signed networks by removing negative links and ignoring signs, separately.  The first question we want to answer is -- which strategy leads to better measurements.  We have built numerous local and global measurements. The second question is -- how they perform in different tasks. 

For each of the parameterized measurements, we performed cross validation for the parameter tuning for each of the tasks. Among measurements discussed in the last section, common neighbor (CN), Jaccard Index (JI), and Preferential Attachment (PA)
based measurements are designed for undirected networks; while ASCOS and RWR are for directed networks. As mentioned before directed measurements can be naturally applied to undirected ones by considering one undirected link as two directed links. Therefore, we conduct experiments with both undirected and directed settings.  

\subsection{Link Prediction} \label{link_prediction}

The problem of link prediction in signed networks is to predict new positive and negative links by given old positive and negative links.  Previous study in unsigned networks suggested that good node relevance measurements generally are good for the prediction of links~\cite{lu2011link}. Therefore, the link prediction performance can reflect the quality of relevance measurements. 

\subsubsection{Experimental Settings}

For each dataset, we randomly choose $80\%$ as training, and the remaining as testing. We perform relevance measurements on the training set to get the relevance scores for each pair of users. The signed specific measurements can obtain a relevance score from $[-1,1]$; hence we directly use the sign of the relevance score to indicate the sign of links.  For ``X-R" and ``X-I", the relevance score is in ``[0,1]". From the training data, we search an optimal threshold from the training data, and then if the relevance score is less than threshold, we predict a negative link and positive otherwise. Since positive and negative links are usually imbalanced in real-world signed networks, we use Area Under the Curve (AUC) as the metric to assess the performance of link prediction. For all four datasets, network information is available thus they all can be used in the link prediction experiment.  Under the undirected setting, we ignore the directions of links following common practice in~\cite{leskovec2010predicting}.

\subsubsection{Link Prediction Performance}

The link prediction comparison results are shown in Table~\ref{tab:link_undirected} and Table~\ref{tab:link_directed} for undirected and directed settings, respectively. From the Table~\ref{tab:link_undirected}, we make the following observations under the undirected setting: 
\begin{itemize}
\item Signed specific relevance measurements perform much better than these by (1) removing negative links and (2) ignoring signs.  These results suggest the importance of negative links in building node relevance measurements for signed networks.   
\item Global signed measurements consistently obtain better link prediction performance than local signed measurements. Global methods consider long circles; while local methods only consider triads. This observation is consistent with that in~\cite{chiang2011exploiting} -- long circles contain rich information in helping predict the signs of links. 
\end{itemize}

Under the directed setting, signed measurements also outperform than these via (1) removing negative links and (2) ignoring signs; while the signed RWR obtains the best performance. 

\subsection{Tie Strength Prediction}

The relevance score for signed networks not only can indicate the signs of links but also can indicate the connection strengthen. Therefore, another possible application of relevance measurements is tie strength prediction, which aims to assign a weight to a link to indicate the connection strengthen~\cite{gilbert2009predicting,xiang2010modeling,kahanda2009strength}. In other words, the input of a tie strength prediction algorithm is an unweighted (or binary) network and the output is a weighted network. 

\subsubsection{Experimental Settings}
We have only used the two Bitcoin datasets (Bitcoin-Alpha and Bitcoin-OTC) for this task as they are the only two of the four datasets that have a ground truth strength associated with each edge in the network. Note that we have normalized the two datasets to have their strength in the range [-1,1] to ensure easy mappings from our presented node relevance measurements to the tie strengths associated with these datasets edges.

We directly use the relevance scores of signed specific measurements as the predicted tie strength. While for ``X-R'' and ``X-I", we use the similar strategy as link prediction for tie strength prediction -- we search an optimal threshold from the training data to map the relevance scores to [-1,1].

We provide the entire binary network as input and then attempt to predict the tie strength associated with each edge of the network. Therefore, we use root-mean-square error (RMSE) as the metric to evaluate the performance of tie strength prediction. 

\begin{table}
	\begin{center}
	\small
	\vskip -1em
	\caption{\label{tab:tie_strength}Performance comparison of tie strength prediction under the undirected setting.}
		\begin{tabular}{|c|c|c|}	\hline
			Metrics		&	Bitcoin-Alpha	&	Bitcoin-OTC	\\ \hline		
			UCN-R		&	0.317		&	0.286	 \\
			UCN-I		&	0.323		&	0.289	 \\
			SCN			&	\textbf{0.302}	&	\textbf{0.278}	 \\
			UJI-R		&	0.317		&	0.286	 \\
			UJI-I		&	0.323		&	0.289	 \\			
			SJI			&	\textbf{0.302}	&	\textbf{0.278}	 \\
			UPA-R		&	0.408		&	0.365	 \\
			UPA-I		&	0.429		&	0.370	 \\
			SPA			&	0.324			&	0.291	 \\ \hline
			UK-R			&	0.378		&	0.336	 \\
			UK-I		&	0.402		&	0.344	 \\
			SK			&	0.311	&	0.283	 \\ 
			
			URWR-R		&	0.410		&	0.379	 \\
			URWR-I		&	0.433		&	0.384	 \\
			SRWR			&	0.328		&	0.296	 \\
			UASCOS++-R	&	0.410		&	0.372	 \\			
			UASCOS++-I	&	0.438		&	0.378	 \\			
			SASCOS++		&	0.328		&	0.296	 \\ \hline
		\end{tabular}
	\end{center}
		
		\vskip 2em
		\begin{center}
	\small
	\vskip -1em
	\caption{\label{tab:tiestrength_directed}Performance comparison of tie-strength prediction under the directed setting.}
		\begin{tabular}{|c|c|c|}	\hline	
					Metrics		&	Bitcoin-Alpha	&	Bitcoin-OTC	 \\ \hline	
			UASCOS++-R	&	0.417	&	0.389	 \\
			UASCOS++-I	&	0.405	&	0.386	 \\			
			SASCOS++		&	\textbf{0.355}		&	\textbf{0.315}	 \\			
			URWR-R		&	0.403	&	0.373	 \\		
			URWR-F		&	0.416	&	0.392	 \\
			SRWR			&	0.362	&	0.325	 \\	 \hline
		\end{tabular}
	\end{center}
\end{table}

\subsubsection{Tie Strength Prediction Performance}

The tie strength prediction performance is demonstrated in Table~\ref{tab:tie_strength} and Table~\ref{tab:tiestrength_directed} for undirected and directed settings, respectively. It can be observed from the Table~\ref{tab:tie_strength} for the undirected setting : 
\begin{itemize}
\item Signed specific measurements remarkably outperform these by (1) removing negative links or (2) ignoring signs for tie strength prediction. This further supports the importance of negative links in signed relevance measurements. 
\item Local signed measurements obtain comparable or even better performance than global signed measurements in tie strength prediction. This observation is different from that for link prediction.  To achieve better link prediction performance, we only need to predict sign accurately. However, for tie strengthen prediction, in addition to signs of links, we also need to predict the relevance strength correctly.  Thus, local information could be good at predicting relevance strength. In fact, most existing tie strength prediction algorithms for unsigned networks only use local information~\cite{gilbert2009predicting,xiang2010modeling}. 
\end{itemize}

For the directed setting, we have similar observations for tie strength prediction to link prediction.

\section{Conclusion}

Node relevance measurements have been extensively studied for unsigned social networks. In recent years, signed network analysis has attracted increasing attention. However, as a fundamental task, node relevance measurements are rather limited. In this paper, we offer an initial and comprehensive study on signed relevance measurements. We build numerous local and global  measurements guided by signed network properties and balance theory.  We further study the impact of signed relevance measurements on two signed network analysis tasks, i.e., link prediction and tie strength prediction.  Experimental results demonstrate that (1) dedicated efforts are necessary to build signed relevance measurements with negative links; (2) global methods significantly outperform local methods for link prediction
; while local methods obtain comparable or even slightly better performance than global methods for tie strength prediction. 

We will further investigate the following directions. First, we would like to study other social theories for signed networks and build novel relevance measurements based on them. Second, we will study the impact of signed relevance measurements on more signed network analysis tasks such as node classification and node embedding. Finally since properties of negative links are different from these of positive links, we will study signed network modeling.

\bibliographystyle{ACM-Reference-Format}
\bibliography{JT} 


\begin{thebibliography}{48}


\ifx \showCODEN    \undefined \def \showCODEN     #1{\unskip}     \fi
\ifx \showDOI      \undefined \def \showDOI       #1{#1}\fi
\ifx \showISBNx    \undefined \def \showISBNx     #1{\unskip}     \fi
\ifx \showISBNxiii \undefined \def \showISBNxiii  #1{\unskip}     \fi
\ifx \showISSN     \undefined \def \showISSN      #1{\unskip}     \fi
\ifx \showLCCN     \undefined \def \showLCCN      #1{\unskip}     \fi
\ifx \shownote     \undefined \def \shownote      #1{#1}          \fi
\ifx \showarticletitle \undefined \def \showarticletitle #1{#1}   \fi
\ifx \showURL      \undefined \def \showURL       {\relax}        \fi
\providecommand\bibfield[2]{#2}
\providecommand\bibinfo[2]{#2}
\providecommand\natexlab[1]{#1}
\providecommand\showeprint[2][]{arXiv:#2}

\bibitem[\protect\citeauthoryear{Adamic and Adar}{Adamic and Adar}{2003}]%
        {adamic2003friends}
\bibfield{author}{\bibinfo{person}{Lada~A Adamic} {and} \bibinfo{person}{Eytan
  Adar}.} \bibinfo{year}{2003}\natexlab{}.
\newblock \showarticletitle{Friends and neighbors on the web}.
\newblock \bibinfo{journal}{{\em Social networks\/}} \bibinfo{volume}{25},
  \bibinfo{number}{3} (\bibinfo{year}{2003}), \bibinfo{pages}{211--230}.
\newblock


\bibitem[\protect\citeauthoryear{Anchuri and Magdon-Ismail}{Anchuri and
  Magdon-Ismail}{2012}]%
        {anchuri2012communities}
\bibfield{author}{\bibinfo{person}{Pranay Anchuri} {and} \bibinfo{person}{Malik
  Magdon-Ismail}.} \bibinfo{year}{2012}\natexlab{}.
\newblock \showarticletitle{Communities and balance in signed networks: A
  spectral approach}. In \bibinfo{booktitle}{{\em Advances in Social Networks
  Analysis and Mining (ASONAM), 2012 IEEE/ACM International Conference on}}.
  IEEE, \bibinfo{pages}{235--242}.
\newblock


\bibitem[\protect\citeauthoryear{Backstrom and Leskovec}{Backstrom and
  Leskovec}{2011}]%
        {backstrom2011supervised}
\bibfield{author}{\bibinfo{person}{Lars Backstrom} {and} \bibinfo{person}{Jure
  Leskovec}.} \bibinfo{year}{2011}\natexlab{}.
\newblock \showarticletitle{Supervised random walks: predicting and
  recommending links in social networks}. In \bibinfo{booktitle}{{\em
  Proceedings of the fourth ACM international conference on Web search and data
  mining}}. ACM, \bibinfo{pages}{635--644}.
\newblock


\bibitem[\protect\citeauthoryear{Barab{\'a}si and Albert}{Barab{\'a}si and
  Albert}{1999a}]%
        {barabasi1999emergence}
\bibfield{author}{\bibinfo{person}{Albert-L{\'a}szl{\'o} Barab{\'a}si} {and}
  \bibinfo{person}{R{\'e}ka Albert}.} \bibinfo{year}{1999}\natexlab{a}.
\newblock \showarticletitle{Emergence of scaling in random networks}.
\newblock \bibinfo{journal}{{\em science\/}} \bibinfo{volume}{286},
  \bibinfo{number}{5439} (\bibinfo{year}{1999}), \bibinfo{pages}{509--512}.
\newblock


\bibitem[\protect\citeauthoryear{Barab{\'a}si and Albert}{Barab{\'a}si and
  Albert}{1999b}]%
        {barabasi1999pa}
\bibfield{author}{\bibinfo{person}{Albert-L{\'a}szl{\'o} Barab{\'a}si} {and}
  \bibinfo{person}{R{\'e}ka Albert}.} \bibinfo{year}{1999}\natexlab{b}.
\newblock \showarticletitle{Emergence of scaling in random networks}.
\newblock \bibinfo{journal}{{\em science\/}} \bibinfo{volume}{286},
  \bibinfo{number}{5439} (\bibinfo{year}{1999}), \bibinfo{pages}{509--512}.
\newblock


\bibitem[\protect\citeauthoryear{Bhagat, Cormode, and Muthukrishnan}{Bhagat
  et~al\mbox{.}}{2011}]%
        {bhagat2011node}
\bibfield{author}{\bibinfo{person}{Smriti Bhagat}, \bibinfo{person}{Graham
  Cormode}, {and} \bibinfo{person}{S Muthukrishnan}.}
  \bibinfo{year}{2011}\natexlab{}.
\newblock \showarticletitle{Node classification in social networks}.
\newblock In \bibinfo{booktitle}{{\em Social network data analytics}}.
  \bibinfo{publisher}{Springer}, \bibinfo{pages}{115--148}.
\newblock


\bibitem[\protect\citeauthoryear{Cartwright and Harary}{Cartwright and
  Harary}{1956}]%
        {cartwright1956structural}
\bibfield{author}{\bibinfo{person}{Dorwin Cartwright} {and}
  \bibinfo{person}{Frank Harary}.} \bibinfo{year}{1956}\natexlab{}.
\newblock \showarticletitle{Structural balance: a generalization of Heider's
  theory.}
\newblock \bibinfo{journal}{{\em Psychological review\/}} \bibinfo{volume}{63},
  \bibinfo{number}{5} (\bibinfo{year}{1956}), \bibinfo{pages}{277}.
\newblock


\bibitem[\protect\citeauthoryear{Chen and Giles}{Chen and Giles}{2013}]%
        {chen2013ascos}
\bibfield{author}{\bibinfo{person}{Hung-Hsuan Chen} {and}
  \bibinfo{person}{C~Lee Giles}.} \bibinfo{year}{2013}\natexlab{}.
\newblock \showarticletitle{ASCOS: an asymmetric network structure context
  similarity measure}. In \bibinfo{booktitle}{{\em Advances in Social Networks
  Analysis and Mining (ASONAM), 2013 IEEE/ACM International Conference on}}.
  IEEE, \bibinfo{pages}{442--449}.
\newblock


\bibitem[\protect\citeauthoryear{Chen and Giles}{Chen and Giles}{2015}]%
        {chen2015ascos}
\bibfield{author}{\bibinfo{person}{Hung-Hsuan Chen} {and}
  \bibinfo{person}{C~Lee Giles}.} \bibinfo{year}{2015}\natexlab{}.
\newblock \showarticletitle{Ascos++: An asymmetric similarity measure for
  weighted networks to address the problem of simrank}.
\newblock \bibinfo{journal}{{\em ACM Transactions on Knowledge Discovery from
  Data (TKDD)\/}} \bibinfo{volume}{10}, \bibinfo{number}{2}
  (\bibinfo{year}{2015}), \bibinfo{pages}{15}.
\newblock


\bibitem[\protect\citeauthoryear{Chiang, Natarajan, Tewari, and Dhillon}{Chiang
  et~al\mbox{.}}{2011}]%
        {chiang2011exploiting}
\bibfield{author}{\bibinfo{person}{Kai-Yang Chiang}, \bibinfo{person}{Nagarajan
  Natarajan}, \bibinfo{person}{Ambuj Tewari}, {and} \bibinfo{person}{Inderjit~S
  Dhillon}.} \bibinfo{year}{2011}\natexlab{}.
\newblock \showarticletitle{Exploiting longer cycles for link prediction in
  signed networks}. In \bibinfo{booktitle}{{\em Proceedings of the 20th ACM
  international conference on Information and knowledge management}}. ACM,
  \bibinfo{pages}{1157--1162}.
\newblock


\bibitem[\protect\citeauthoryear{Chiang, Whang, and Dhillon}{Chiang
  et~al\mbox{.}}{2012}]%
        {chiang2012scalable}
\bibfield{author}{\bibinfo{person}{Kai-Yang Chiang},
  \bibinfo{person}{Joyce~Jiyoung Whang}, {and} \bibinfo{person}{Inderjit~S
  Dhillon}.} \bibinfo{year}{2012}\natexlab{}.
\newblock \showarticletitle{Scalable clustering of signed networks using
  balance normalized cut}. In \bibinfo{booktitle}{{\em Proceedings of the 21st
  ACM international conference on Information and knowledge management}}. ACM,
  \bibinfo{pages}{615--624}.
\newblock


\bibitem[\protect\citeauthoryear{Facchetti, Iacono, and Altafini}{Facchetti
  et~al\mbox{.}}{2011}]%
        {facchetti2011computing}
\bibfield{author}{\bibinfo{person}{Giuseppe Facchetti},
  \bibinfo{person}{Giovanni Iacono}, {and} \bibinfo{person}{Claudio Altafini}.}
  \bibinfo{year}{2011}\natexlab{}.
\newblock \showarticletitle{Computing global structural balance in large-scale
  signed social networks}.
\newblock \bibinfo{journal}{{\em Proceedings of the National Academy of
  Sciences\/}} \bibinfo{volume}{108}, \bibinfo{number}{52}
  (\bibinfo{year}{2011}), \bibinfo{pages}{20953--20958}.
\newblock


\bibitem[\protect\citeauthoryear{Gilbert and Karahalios}{Gilbert and
  Karahalios}{2009}]%
        {gilbert2009predicting}
\bibfield{author}{\bibinfo{person}{Eric Gilbert} {and} \bibinfo{person}{Karrie
  Karahalios}.} \bibinfo{year}{2009}\natexlab{}.
\newblock \showarticletitle{Predicting tie strength with social media}. In
  \bibinfo{booktitle}{{\em Proceedings of the SIGCHI conference on human
  factors in computing systems}}. ACM, \bibinfo{pages}{211--220}.
\newblock


\bibitem[\protect\citeauthoryear{Guha, Kumar, Raghavan, and Tomkins}{Guha
  et~al\mbox{.}}{2004}]%
        {guha2004propagation}
\bibfield{author}{\bibinfo{person}{Ramanthan Guha}, \bibinfo{person}{Ravi
  Kumar}, \bibinfo{person}{Prabhakar Raghavan}, {and} \bibinfo{person}{Andrew
  Tomkins}.} \bibinfo{year}{2004}\natexlab{}.
\newblock \showarticletitle{Propagation of trust and distrust}. In
  \bibinfo{booktitle}{{\em Proceedings of the 13th international conference on
  World Wide Web}}. ACM, \bibinfo{pages}{403--412}.
\newblock


\bibitem[\protect\citeauthoryear{Heider}{Heider}{1946}]%
        {heider1946attitudes}
\bibfield{author}{\bibinfo{person}{Fritz Heider}.}
  \bibinfo{year}{1946}\natexlab{}.
\newblock \showarticletitle{Attitudes and cognitive organization}.
\newblock \bibinfo{journal}{{\em The Journal of psychology\/}}
  \bibinfo{volume}{21}, \bibinfo{number}{1} (\bibinfo{year}{1946}),
  \bibinfo{pages}{107--112}.
\newblock


\bibitem[\protect\citeauthoryear{Horn and Johnson}{Horn and Johnson}{2013}]%
        {horn2012matrix}
\bibfield{author}{\bibinfo{person}{Roger~A. Horn} {and}
  \bibinfo{person}{Charles~R. Johnson}.} \bibinfo{year}{2013}\natexlab{}.
\newblock \bibinfo{booktitle}{{\em Matrix analysis\/}
  (\bibinfo{edition}{second} ed.)}.
\newblock \bibinfo{publisher}{Cambridge University Press, Cambridge}. xviii+643
  pages.
\newblock
\showISBNx{978-0-521-54823-6}


\bibitem[\protect\citeauthoryear{Hsieh, Chiang, and Dhillon}{Hsieh
  et~al\mbox{.}}{2012}]%
        {hsieh2012low}
\bibfield{author}{\bibinfo{person}{Cho-Jui Hsieh}, \bibinfo{person}{Kai-Yang
  Chiang}, {and} \bibinfo{person}{Inderjit~S Dhillon}.}
  \bibinfo{year}{2012}\natexlab{}.
\newblock \showarticletitle{Low rank modeling of signed networks}. In
  \bibinfo{booktitle}{{\em Proceedings of the 18th ACM SIGKDD international
  conference on Knowledge discovery and data mining}}. ACM,
  \bibinfo{pages}{507--515}.
\newblock


\bibitem[\protect\citeauthoryear{Jeh and Widom}{Jeh and Widom}{2002}]%
        {jeh2002simrank}
\bibfield{author}{\bibinfo{person}{Glen Jeh} {and} \bibinfo{person}{Jennifer
  Widom}.} \bibinfo{year}{2002}\natexlab{}.
\newblock \showarticletitle{SimRank: a measure of structural-context
  similarity}. In \bibinfo{booktitle}{{\em Proceedings of the eighth ACM SIGKDD
  international conference on Knowledge discovery and data mining}}. ACM,
  \bibinfo{pages}{538--543}.
\newblock


\bibitem[\protect\citeauthoryear{Jung, Jin, Sael, and Kang}{Jung
  et~al\mbox{.}}{2016}]%
        {Jung2016srwr}
\bibfield{author}{\bibinfo{person}{Jinhong Jung}, \bibinfo{person}{Woojeong
  Jin}, \bibinfo{person}{Lee Sael}, {and} \bibinfo{person}{U. Kang}.}
  \bibinfo{year}{2016}\natexlab{}.
\newblock \showarticletitle{Personalized Ranking in Signed Networks Using
  Signed Random Walk with Restart}. In \bibinfo{booktitle}{{\em {IEEE} 16th
  International Conference on Data Mining, {ICDM} 2016, December 12-15, 2016,
  Barcelona, Spain}}. \bibinfo{pages}{973--978}.
\newblock
\showDOI{%
\url{https://doi.org/10.1109/ICDM.2016.0122}}


\bibitem[\protect\citeauthoryear{Kahanda and Neville}{Kahanda and
  Neville}{2009}]%
        {kahanda2009strength}
\bibfield{author}{\bibinfo{person}{Indika Kahanda} {and}
  \bibinfo{person}{Jennifer Neville}.} \bibinfo{year}{2009}\natexlab{}.
\newblock \showarticletitle{Using Transactional Information to Predict Link
  Strength in Online Social Networks}. In \bibinfo{booktitle}{{\em Third
  International AAAI Conference on Weblogs and Social Media}}.
\newblock


\bibitem[\protect\citeauthoryear{Katz}{Katz}{1953}]%
        {katz1953sim}
\bibfield{author}{\bibinfo{person}{Leo Katz}.} \bibinfo{year}{1953}\natexlab{}.
\newblock \showarticletitle{A new status index derived from sociometric
  analysis}.
\newblock \bibinfo{journal}{{\em Psychometrika\/}} \bibinfo{volume}{18},
  \bibinfo{number}{1} (\bibinfo{year}{1953}), \bibinfo{pages}{39--43}.
\newblock


\bibitem[\protect\citeauthoryear{Kunegis, Lommatzsch, and Bauckhage}{Kunegis
  et~al\mbox{.}}{2009}]%
        {kunegis2009slashdot}
\bibfield{author}{\bibinfo{person}{J{\'e}r{\^o}me Kunegis},
  \bibinfo{person}{Andreas Lommatzsch}, {and} \bibinfo{person}{Christian
  Bauckhage}.} \bibinfo{year}{2009}\natexlab{}.
\newblock \showarticletitle{The slashdot zoo: mining a social network with
  negative edges}. In \bibinfo{booktitle}{{\em Proceedings of the 18th
  international conference on World wide web}}. ACM, \bibinfo{pages}{741--750}.
\newblock


\bibitem[\protect\citeauthoryear{Kunegis, Schmidt, Lommatzsch, Lerner, De~Luca,
  and Albayrak}{Kunegis et~al\mbox{.}}{2010}]%
        {kunegis2010spectral}
\bibfield{author}{\bibinfo{person}{J{\'e}r{\^o}me Kunegis},
  \bibinfo{person}{Stephan Schmidt}, \bibinfo{person}{Andreas Lommatzsch},
  \bibinfo{person}{J{\"u}rgen Lerner}, \bibinfo{person}{Ernesto~W De~Luca},
  {and} \bibinfo{person}{Sahin Albayrak}.} \bibinfo{year}{2010}\natexlab{}.
\newblock \showarticletitle{Spectral analysis of signed graphs for clustering,
  prediction and visualization}. In \bibinfo{booktitle}{{\em Proceedings of the
  2010 SIAM International Conference on Data Mining}}. SIAM,
  \bibinfo{pages}{559--570}.
\newblock


\bibitem[\protect\citeauthoryear{Leskovec, Huttenlocher, and
  Kleinberg}{Leskovec et~al\mbox{.}}{2010a}]%
        {leskovec2010predicting}
\bibfield{author}{\bibinfo{person}{Jure Leskovec}, \bibinfo{person}{Daniel
  Huttenlocher}, {and} \bibinfo{person}{Jon Kleinberg}.}
  \bibinfo{year}{2010}\natexlab{a}.
\newblock \showarticletitle{Predicting positive and negative links in online
  social networks}. In \bibinfo{booktitle}{{\em Proceedings of the 19th
  international conference on World wide web}}. ACM, \bibinfo{pages}{641--650}.
\newblock


\bibitem[\protect\citeauthoryear{Leskovec, Huttenlocher, and
  Kleinberg}{Leskovec et~al\mbox{.}}{2010b}]%
        {leskovec2010signed}
\bibfield{author}{\bibinfo{person}{Jure Leskovec}, \bibinfo{person}{Daniel
  Huttenlocher}, {and} \bibinfo{person}{Jon Kleinberg}.}
  \bibinfo{year}{2010}\natexlab{b}.
\newblock \showarticletitle{Signed networks in social media}. In
  \bibinfo{booktitle}{{\em Proceedings of the SIGCHI conference on human
  factors in computing systems}}. ACM, \bibinfo{pages}{1361--1370}.
\newblock


\bibitem[\protect\citeauthoryear{Li, Tang, Wang, Wan, Chang, and Liu}{Li
  et~al\mbox{.}}{2017}]%
        {li2017understanding}
\bibfield{author}{\bibinfo{person}{Jundong Li}, \bibinfo{person}{Jiliang Tang},
  \bibinfo{person}{Yilin Wang}, \bibinfo{person}{Yali Wan}, \bibinfo{person}{Yi
  Chang}, {and} \bibinfo{person}{Huan Liu}.} \bibinfo{year}{2017}\natexlab{}.
\newblock \showarticletitle{Understanding and Predicting Delay in Reciprocal
  Relations}.
\newblock \bibinfo{journal}{{\em arXiv preprint arXiv:1703.01393\/}}
  (\bibinfo{year}{2017}).
\newblock


\bibitem[\protect\citeauthoryear{Liben-Nowell and Kleinberg}{Liben-Nowell and
  Kleinberg}{2007}]%
        {liben2007link}
\bibfield{author}{\bibinfo{person}{David Liben-Nowell} {and}
  \bibinfo{person}{Jon Kleinberg}.} \bibinfo{year}{2007}\natexlab{}.
\newblock \showarticletitle{The link-prediction problem for social networks}.
\newblock \bibinfo{journal}{{\em journal of the Association for Information
  Science and Technology\/}} \bibinfo{volume}{58}, \bibinfo{number}{7}
  (\bibinfo{year}{2007}), \bibinfo{pages}{1019--1031}.
\newblock


\bibitem[\protect\citeauthoryear{Lorrain and White}{Lorrain and White}{1971}]%
        {lorrain1971structural}
\bibfield{author}{\bibinfo{person}{Francois Lorrain} {and}
  \bibinfo{person}{Harrison~C White}.} \bibinfo{year}{1971}\natexlab{}.
\newblock \showarticletitle{Structural equivalence of individuals in social
  networks}.
\newblock \bibinfo{journal}{{\em The Journal of mathematical sociology\/}}
  \bibinfo{volume}{1}, \bibinfo{number}{1} (\bibinfo{year}{1971}),
  \bibinfo{pages}{49--80}.
\newblock


\bibitem[\protect\citeauthoryear{L{\"u} and Zhou}{L{\"u} and Zhou}{2011}]%
        {lu2011link}
\bibfield{author}{\bibinfo{person}{Linyuan L{\"u}} {and} \bibinfo{person}{Tao
  Zhou}.} \bibinfo{year}{2011}\natexlab{}.
\newblock \showarticletitle{Link prediction in complex networks: A survey}.
\newblock \bibinfo{journal}{{\em Physica A: Statistical Mechanics and its
  Applications\/}} \bibinfo{volume}{390}, \bibinfo{number}{6}
  (\bibinfo{year}{2011}), \bibinfo{pages}{1150--1170}.
\newblock


\bibitem[\protect\citeauthoryear{Ma, Lyu, and King}{Ma et~al\mbox{.}}{2009}]%
        {ma2009learning}
\bibfield{author}{\bibinfo{person}{Hao Ma}, \bibinfo{person}{Michael~R Lyu},
  {and} \bibinfo{person}{Irwin King}.} \bibinfo{year}{2009}\natexlab{}.
\newblock \showarticletitle{Learning to recommend with trust and distrust
  relationships}. In \bibinfo{booktitle}{{\em Proceedings of the third ACM
  conference on Recommender systems}}. ACM, \bibinfo{pages}{189--196}.
\newblock


\bibitem[\protect\citeauthoryear{McPherson, Smith-Lovin, and Cook}{McPherson
  et~al\mbox{.}}{2001}]%
        {mcpherson2001birds}
\bibfield{author}{\bibinfo{person}{Miller McPherson}, \bibinfo{person}{Lynn
  Smith-Lovin}, {and} \bibinfo{person}{James~M Cook}.}
  \bibinfo{year}{2001}\natexlab{}.
\newblock \showarticletitle{Birds of a feather: Homophily in social networks}.
\newblock \bibinfo{journal}{{\em Annual review of sociology\/}}
  \bibinfo{volume}{27}, \bibinfo{number}{1} (\bibinfo{year}{2001}),
  \bibinfo{pages}{415--444}.
\newblock


\bibitem[\protect\citeauthoryear{Newman}{Newman}{2010}]%
        {newman2010intro}
\bibfield{author}{\bibinfo{person}{Mark Newman}.}
  \bibinfo{year}{2010}\natexlab{}.
\newblock \bibinfo{booktitle}{{\em Networks: An Introduction}}.
\newblock \bibinfo{publisher}{Oxford University Press, Inc.},
  \bibinfo{address}{New York, NY, USA}.
\newblock
\showISBNx{0199206651, 9780199206650}


\bibitem[\protect\citeauthoryear{Scott}{Scott}{2012}]%
        {scott2012social}
\bibfield{author}{\bibinfo{person}{John Scott}.}
  \bibinfo{year}{2012}\natexlab{}.
\newblock \bibinfo{booktitle}{{\em Social network analysis}}.
\newblock \bibinfo{publisher}{Sage}.
\newblock


\bibitem[\protect\citeauthoryear{Symeonidis and Tiakas}{Symeonidis and
  Tiakas}{2014}]%
        {symeonidis2014transitive}
\bibfield{author}{\bibinfo{person}{Panagiotis Symeonidis} {and}
  \bibinfo{person}{Eleftherios Tiakas}.} \bibinfo{year}{2014}\natexlab{}.
\newblock \showarticletitle{Transitive node similarity: predicting and
  recommending links in signed social networks}.
\newblock \bibinfo{journal}{{\em World Wide Web\/}} \bibinfo{volume}{17},
  \bibinfo{number}{4} (\bibinfo{year}{2014}), \bibinfo{pages}{743--776}.
\newblock


\bibitem[\protect\citeauthoryear{Szell, Lambiotte, and Thurner}{Szell
  et~al\mbox{.}}{2010}]%
        {szell2010multirelational}
\bibfield{author}{\bibinfo{person}{Michael Szell}, \bibinfo{person}{Renaud
  Lambiotte}, {and} \bibinfo{person}{Stefan Thurner}.}
  \bibinfo{year}{2010}\natexlab{}.
\newblock \showarticletitle{Multirelational organization of large-scale social
  networks in an online world}.
\newblock \bibinfo{journal}{{\em Proceedings of the National Academy of
  Sciences\/}} \bibinfo{volume}{107}, \bibinfo{number}{31}
  (\bibinfo{year}{2010}), \bibinfo{pages}{13636--13641}.
\newblock


\bibitem[\protect\citeauthoryear{Tang, Chang, Aggarwal, and Liu}{Tang
  et~al\mbox{.}}{2015}]%
        {tang2015negative}
\bibfield{author}{\bibinfo{person}{Jiliang Tang}, \bibinfo{person}{Shiyu
  Chang}, \bibinfo{person}{Charu Aggarwal}, {and} \bibinfo{person}{Huan Liu}.}
  \bibinfo{year}{2015}\natexlab{}.
\newblock \showarticletitle{Negative link prediction in social media}. In
  \bibinfo{booktitle}{{\em Proceedings of the Eighth ACM International
  Conference on Web Search and Data Mining}}. ACM, \bibinfo{pages}{87--96}.
\newblock


\bibitem[\protect\citeauthoryear{Tang, Chang, Aggarwal, and Liu}{Tang
  et~al\mbox{.}}{2016}]%
        {tang2016survey}
\bibfield{author}{\bibinfo{person}{Jiliang Tang}, \bibinfo{person}{Yi Chang},
  \bibinfo{person}{Charu Aggarwal}, {and} \bibinfo{person}{Huan Liu}.}
  \bibinfo{year}{2016}\natexlab{}.
\newblock \showarticletitle{A survey of signed network mining in social media}.
\newblock \bibinfo{journal}{{\em ACM Computing Surveys (CSUR)\/}}
  \bibinfo{volume}{49}, \bibinfo{number}{3} (\bibinfo{year}{2016}),
  \bibinfo{pages}{42}.
\newblock


\bibitem[\protect\citeauthoryear{Tang, Hu, and Liu}{Tang et~al\mbox{.}}{2014}]%
        {tang2014distrust}
\bibfield{author}{\bibinfo{person}{Jiliang Tang}, \bibinfo{person}{Xia Hu},
  {and} \bibinfo{person}{Huan Liu}.} \bibinfo{year}{2014}\natexlab{}.
\newblock \showarticletitle{Is distrust the negation of trust?: the value of
  distrust in social media}. In \bibinfo{booktitle}{{\em Proceedings of the
  25th ACM conference on Hypertext and social media}}. ACM,
  \bibinfo{pages}{148--157}.
\newblock


\bibitem[\protect\citeauthoryear{Tang and Liu}{Tang and Liu}{2010}]%
        {tang2010community}
\bibfield{author}{\bibinfo{person}{Lei Tang} {and} \bibinfo{person}{Huan Liu}.}
  \bibinfo{year}{2010}\natexlab{}.
\newblock \showarticletitle{Community detection and mining in social media}.
\newblock \bibinfo{journal}{{\em Synthesis Lectures on Data Mining and
  Knowledge Discovery\/}} \bibinfo{volume}{2}, \bibinfo{number}{1}
  (\bibinfo{year}{2010}), \bibinfo{pages}{1--137}.
\newblock


\bibitem[\protect\citeauthoryear{Tong, Faloutsos, and Pan}{Tong
  et~al\mbox{.}}{2006}]%
        {tong2006fast}
\bibfield{author}{\bibinfo{person}{Hanghang Tong}, \bibinfo{person}{Christos
  Faloutsos}, {and} \bibinfo{person}{Jia-yu Pan}.}
  \bibinfo{year}{2006}\natexlab{}.
\newblock \showarticletitle{Fast Random Walk with Restart and Its
  Applications}. In \bibinfo{booktitle}{{\em Data Mining, 2006. ICDM'06. Sixth
  International Conference on}}. IEEE, \bibinfo{pages}{613--622}.
\newblock


\bibitem[\protect\citeauthoryear{Victor, Cornelis, De~Cock, and
  Teredesai}{Victor et~al\mbox{.}}{2009}]%
        {victor2009trust}
\bibfield{author}{\bibinfo{person}{Patricia Victor}, \bibinfo{person}{Chris
  Cornelis}, \bibinfo{person}{Martine De~Cock}, {and} \bibinfo{person}{Ankur
  Teredesai}.} \bibinfo{year}{2009}\natexlab{}.
\newblock \showarticletitle{Trust-and distrust-based recommendations for
  controversial reviews}. In \bibinfo{booktitle}{{\em Web Science Conference
  (WebSci'09: Society On-Line)}}.
\newblock


\bibitem[\protect\citeauthoryear{Wang, Tang, Aggarwal, Chang, and Liu}{Wang
  et~al\mbox{.}}{2017}]%
        {wang2017signed}
\bibfield{author}{\bibinfo{person}{Suhang Wang}, \bibinfo{person}{Jiliang
  Tang}, \bibinfo{person}{Charu Aggarwal}, \bibinfo{person}{Yi Chang}, {and}
  \bibinfo{person}{Huan Liu}.} \bibinfo{year}{2017}\natexlab{}.
\newblock \showarticletitle{Signed network embedding in social media}. SDM.
\newblock


\bibitem[\protect\citeauthoryear{Wasserman and Faust}{Wasserman and
  Faust}{1994}]%
        {wasserman1994social}
\bibfield{author}{\bibinfo{person}{Stanley Wasserman} {and}
  \bibinfo{person}{Katherine Faust}.} \bibinfo{year}{1994}\natexlab{}.
\newblock \bibinfo{booktitle}{{\em Social network analysis: Methods and
  applications}}. Vol.~\bibinfo{volume}{8}.
\newblock \bibinfo{publisher}{Cambridge university press}.
\newblock


\bibitem[\protect\citeauthoryear{Xiang, Neville, and Rogati}{Xiang
  et~al\mbox{.}}{2010}]%
        {xiang2010modeling}
\bibfield{author}{\bibinfo{person}{Rongjing Xiang}, \bibinfo{person}{Jennifer
  Neville}, {and} \bibinfo{person}{Monica Rogati}.}
  \bibinfo{year}{2010}\natexlab{}.
\newblock \showarticletitle{Modeling relationship strength in online social
  networks}. In \bibinfo{booktitle}{{\em Proceedings of the 19th international
  conference on World wide web}}. ACM, \bibinfo{pages}{981--990}.
\newblock


\bibitem[\protect\citeauthoryear{Yang, Cheung, and Liu}{Yang
  et~al\mbox{.}}{2007}]%
        {yang2007community}
\bibfield{author}{\bibinfo{person}{Bo Yang}, \bibinfo{person}{William Cheung},
  {and} \bibinfo{person}{Jiming Liu}.} \bibinfo{year}{2007}\natexlab{}.
\newblock \showarticletitle{Community mining from signed social networks}.
\newblock \bibinfo{journal}{{\em IEEE transactions on knowledge and data
  engineering\/}} \bibinfo{volume}{19}, \bibinfo{number}{10}
  (\bibinfo{year}{2007}).
\newblock


\bibitem[\protect\citeauthoryear{Yin, Cui, Li, Yao, and Chen}{Yin
  et~al\mbox{.}}{2012}]%
        {yin2012longtail}
\bibfield{author}{\bibinfo{person}{Hongzhi Yin}, \bibinfo{person}{Bin Cui},
  \bibinfo{person}{Jing Li}, \bibinfo{person}{Junjie Yao}, {and}
  \bibinfo{person}{Chen Chen}.} \bibinfo{year}{2012}\natexlab{}.
\newblock \showarticletitle{Challenging the Long Tail Recommendation}.
\newblock \bibinfo{journal}{{\em Proc. VLDB Endow.\/}} \bibinfo{volume}{5},
  \bibinfo{number}{9} (\bibinfo{date}{May} \bibinfo{year}{2012}),
  \bibinfo{pages}{896--907}.
\newblock
\showISSN{2150-8097}
\showDOI{%
\url{https://doi.org/10.14778/2311906.2311916}}


\bibitem[\protect\citeauthoryear{Yin, Gupta, Weninger, and Han}{Yin
  et~al\mbox{.}}{2010}]%
        {yin2010unified}
\bibfield{author}{\bibinfo{person}{Zhijun Yin}, \bibinfo{person}{Manish Gupta},
  \bibinfo{person}{Tim Weninger}, {and} \bibinfo{person}{Jiawei Han}.}
  \bibinfo{year}{2010}\natexlab{}.
\newblock \showarticletitle{A unified framework for link recommendation using
  random walks}. In \bibinfo{booktitle}{{\em Advances in Social Networks
  Analysis and Mining (ASONAM), 2010 International Conference on}}. IEEE,
  \bibinfo{pages}{152--159}.
\newblock


\bibitem[\protect\citeauthoryear{Zheng, Zeng, and Wang}{Zheng
  et~al\mbox{.}}{2015}]%
        {zheng2015social}
\bibfield{author}{\bibinfo{person}{Xiaolong Zheng}, \bibinfo{person}{Daniel
  Zeng}, {and} \bibinfo{person}{Fei-Yue Wang}.}
  \bibinfo{year}{2015}\natexlab{}.
\newblock \showarticletitle{Social balance in signed networks}.
\newblock \bibinfo{journal}{{\em Information Systems Frontiers\/}}
  \bibinfo{volume}{17}, \bibinfo{number}{5} (\bibinfo{year}{2015}),
  \bibinfo{pages}{1077--1095}.
\newblock


\end{thebibliography}

\end{document}